\definecolor{Red}{rgb}{1,0,0}
\definecolor{Blue}{rgb}{0,0,1}
\definecolor{Olive}{rgb}{0.41,0.55,0.13}
\definecolor{Green}{rgb}{0,1,0}
\definecolor{MGreen}{rgb}{0,0.8,0}
\definecolor{DGreen}{rgb}{0,0.55,0}
\definecolor{Yellow}{rgb}{1,1,0}
\definecolor{Cyan}{rgb}{0,1,1}
\definecolor{Magenta}{rgb}{1,0,1}
\definecolor{Orange}{rgb}{1,.5,0}
\definecolor{Violet}{rgb}{.5,0,.5}
\definecolor{Purple}{rgb}{.75,0,.25}
\definecolor{Brown}{rgb}{.75,.5,.25}
\definecolor{Grey}{rgb}{.5,.5,.5}
\newtheorem*{rep@theorem}{\rep@title}
\newcommand{\newreptheorem}[2]{%
\newenvironment{rep#1}[1]{%
 \def\rep@title{#2 \ref{##1}}%
 \begin{rep@theorem}}%
 {\end{rep@theorem}}}
\theoremstyle{plain}
\newtheorem{theorem}{Theorem}[section] 
\newtheorem{corollary}{Corollary}[section]
\newtheorem{lemma}{Lemma}[section]
\newtheorem{theorem*}{Theorem}   % no section numbers
\newtheorem{lemma*}{Lemma} % no section numbers
\newtheorem{corollary*}{Corollary} % no section numbers
\newtheorem{remark*}{Remark}
\newtheorem*{example*}{Example}
\newtheorem{remark}{Remark}[section]
\newlength{\widebarargwidth}
\newlength{\widebarargheight}
\newlength{\widebarargdepth}
\theoremstyle{definition}
\def\cL{{\cal L}}
\def\cM{{\cal M}}
\def\cN{{\cal N}}
\def\cS{{\cal S}}
\def\cW{{\cal W}}
\newcommand{\mprob}{\ensuremath{\mathbb{P}}}
\newcommand{\argmax}{\ensuremath{\operatorname{argmax}}}
\begin{document}

\begin{center}

{\bf{\Large{Information-theoretic bounds for exact recovery in weighted stochastic block models using the Renyi divergence}}}

\vspace*{.25in}

\begin{tabular}{ccc}
{\large{Varun Jog}} & \hspace*{.75in} & {\large{Po-Ling Loh}} \\ 
{\large{\texttt{varunjog@wharton.upenn.edu}}} & & {\large{\texttt{loh@wharton.upenn.edu}}} \vspace{.2in}
 \\
Departments of Statistics \& CIS & \hspace{.2in} & Department of Statistics \\
Warren Center for Network and Data Sciences && The Wharton School \\
University of Pennsylvania && University of Pennsylvania \\ Philadelphia, PA 19104 & & Philadelphia, PA 19104
\end{tabular}

\vspace*{.2in}

September 2015

\vspace*{.2in}

\end{center}

%\maketitle

\begin{abstract}
We derive sharp thresholds for exact recovery of communities in a weighted stochastic block model, where observations are collected in the form of a weighted adjacency matrix, and the weight of each edge is generated independently from a distribution determined by the community membership of its endpoints. Our main result, characterizing the precise boundary between success and failure of maximum likelihood estimation when edge weights are drawn from discrete distributions, involves the Renyi divergence of order $\frac{1}{2}$ between the distributions of within-community and between-community edges. When the Renyi divergence is above a certain threshold, meaning the edge distributions are sufficiently separated, maximum likelihood succeeds with probability tending to 1; when the Renyi divergence is below the threshold, maximum likelihood fails with probability bounded away from 0. In the language of graphical channels, the Renyi divergence pinpoints the information-theoretic capacity of discrete graphical channels with binary inputs. Our results generalize previously established thresholds derived specifically for unweighted block models, and support an important natural intuition relating the intrinsic hardness of community estimation to the problem of edge classification. Along the way, we establish a general relationship between the Renyi divergence and the probability of success of the maximum likelihood estimator for arbitrary edge weight distributions. Finally, we discuss consequences of our bounds for the related problems of censored block models and submatrix localization, which may be seen as special cases of the framework developed in our paper.
\end{abstract}

\section{Introduction}

The recent explosion of interest in network data has created a need for new statistical methods for analyzing network datasets and interpreting results~\cite{NewEtal06, DavKle10, Jac10, GolEtal10}. One active area of research with diverse applications in many scientific fields pertains to community detection and estimation, where the information available consists of the presence or absence of edges between nodes in the graph, and the goal is to partition the nodes into disjoint groups based on their relative connectivity~\cite{FieEtal85, HarSha00, PriEtal00, ShiMal00, McS01, NewGir04}.

A standard assumption in statistical modeling is that conditioned on the community labels of the nodes in the graph, edges are generated independently according to fixed distributions governing the connectivity of nodes within and between communities in the graph. This is the setting of the stochastic block model (SBM)~\cite{HolEtal83, HarEtal76, WasAnd87}. In the homogeneous case, edges follow one distribution when both endpoints are in the same community, regardless of the community label; and edges follow a second distribution when the endpoints are in different communities. A variety of interesting statistical results have been derived recently characterizing the regimes under which \emph{exact} or \emph{weak} recovery of community labels is possible (e.g., \cite{MosEtal12, MosEtal14, Mas14, AbbEtal14, abbe2014exact, AbbSan15, HajEtal14, HajEtal15, zhangminimax}). Exact recovery refers to the case where the communities are partitioned perfectly, and a corresponding estimator is called \emph{strongly consistent}. On the other hand, weak recovery refers to the case where the estimated community labels are positively correlated with the true labels.

In the setting of stochastic block models with nearly-equal community sizes and homogeneous connection probabilities, Zhang and Zhou~\cite{zhangminimax} derive minimax rates for statistical estimation in the case of exact recovery. Interestingly, the expression they obtain contains the Renyi divergence of order $\frac{1}{2}$ between two Bernoulli distributions, corresponding to the probability of generation for within-community and between-community edges. Hence, the hardness of recovering the community assignments is somehow captured in the hardness of inferring whether pairs of nodes lie within the same community or in different communities. This result has a very natural intuitive interpretation, since knowing whether each pair of nodes (or even each pair of nodes along the edges of a spanning tree of the graph) lies in the same community would clearly lead to perfect recovery of the community labels. On the other hand, this constitutes a somewhat different perspective from the prevailing viewpoint of the hardness of recovering community labels being innately tied to the success or failure of a hypothesis testing problem determining whether an individual node lies in one community or another \cite{AbbSan15, MosEtal14, zhangminimax}. Several other attempts have been made to relate the sharp threshold behavior of community estimation to various quantities in information theory~\cite{AbbMon13, CheEtal15, DesEtal15, AbbSan15}, but the precise relationship is still largely unknown.

The vast majority of existing literature on stochastic block models has focused on the case where no other information is available beyond the unweighted adjacency matrix. In an attempt to better understand the information-theoretic quantities at work in determining the thresholds for exact recovery in stochastic block models, we will widen our consideration to the more general weighted problem. Note that situations naturally arise where network datasets contain information about the strength or type of connectivity between edges, as well~\cite{New04, BocEtal06}. In social networks, information may be available quantifying the strength of a tie, such as the number of interactions between the individuals in a certain time period~\cite{Sad72}; in cellular networks, information may be available quantifying the frequency of communication between users~\cite{BloEtal08}; in airline networks, edges may be labeled according to the type of air traffic linking pairs of cities~\cite{BarEtal04}; and in neural networks, edge weights may symbolize the level of neural activity between regions in the brain~\cite{RubSpo10}. Of course, the connectivity data could be condensed into an adjacency matrix consisting of only zeros and ones, but this would result in a loss of valuable information that could be used to recover node communities.

In this paper, we analyze the ``weighted" setting of the stochastic block model, where edges are generated from arbitrary distributions that are not restricted to being Bernoulli. Our key question is whether the Renyi divergence of order $\frac{1}{2}$ appearing in the results of Zhang and Zhou~\cite{zhangminimax} continues to persist as a fundamental quantity that determines the hardness of exact recovery in the generalized setting. Surprisingly, our answer is affirmative. First, we show that the Renyi divergence between the within-community and between-community edge distributions may be used directly to control the probability of failure of the maximum likelihood estimator. Hence, as the Renyi divergence increases, corresponding to edge distributions that are further apart, the probability of failure of maximum likelihood is driven to zero. Next, we focus on a specific regime involving discrete weights (or colors), where the average number of edges of each specific color connected to a node scales according to $\Theta(\log n)$. In this case, we show that the bounds derived earlier involving the Renyi divergence are in fact tight, and exact recovery is impossible when the Renyi divergence between the weighted distributions is below a certain threshold. Our results are also applicable in the more general setting of more than two communities. Finally, we discuss the consequences of our theorems in the context of decoding in discrete graphical channels and submatrix localization with continuous distributions.

The remainder of the paper is organized as follows: In Section~\ref{SecBackground}, we introduce the basic background and mathematical notation used in the paper. In Section~\ref{SecMain}, we present our main theoretical contributions, beginning with achievability results for the maximum likelihood estimator in a weighted stochastic block model with arbitrarily many communities. We then derive sharp thresholds for exact recovery in the discrete weighted case, and then interpret our results in the framework of graphical channels and submatrix localization. Section~\ref{SecProofs} contains the main arguments for the proofs of our theorems. We conclude in Section~\ref{SecDiscussion} with a discussion of several open questions related to phase transitions in weighted stochastic block models.

%%%

\section{Background and problem setup}
\label{SecBackground}

Consider a stochastic block model with $K \ge 2$ communities, each with $n$ nodes. For each node $i$, let $\sigma(i) \in \{1, 2, \dots, K\}$ denote the community assignment of the node. A weighted stochastic block model consists of a random graph generated on the vertices $\{1, 2, \dots, nK\}$, using the community assignments $\sigma$, as well as a sequence of distributions $p_n^{(k_1, k_2)} (= p_n^{(k_2, k_1)})$, for $1 \leq k_1, k_2 \leq K$ and $n \geq 1$. The support of the distributions may be continuous or discrete. In the discrete case, we will often use the terms weight, color, and label interchangeably. The weighted random graph is generated as follows: Each edge $(i,j)$ is assigned a random weight $W_{(i,j)} \sim p^{(\sigma(i), \sigma(j))}_n$, independent of the weights of all other edges. Such a stochastic block model is called \emph{non-homogeneous}, since the distributions of the edge weights depend not only on whether the endpoints of an edge belong to the same community, but also on which communities they belong to.

In this paper, we will consider a \emph{homogeneous} weighted stochastic block model, which may be described simply as follows: Given a sequence of distributions $\{p_n\}$ and $\{q_n\}$, every edge $(i, j)$ is assigned a random weight $W_{(i,j)}$, independently of all other edge weights, such that
\begin{align}
\label{EqnHomogeneous}
W_{(i,j)} \sim 
\begin{cases}
p_n &\text{ if } \sigma(i) = \sigma(j),\\
q_n &\text{ if } \sigma(i) \neq \sigma(j).
\end{cases}
\end{align}
The traditional (unweighted) stochastic block models constitute a special case of weighted stochastic block models, since we may encode edges with weights 1 or 0, corresponding to the presence or absence of an edge.

Our ultimate goal is to infer the underlying communities based on observing the weight matrix $W$. Several differing notions of inference have been studied in the case of unweighted stochastic block models. In the ``sparse regime," where the distributions $p_n$ and $q_n$ scale as
\begin{align*}
p_n(0) = \frac{1-a/n}{n}, \qquad &p_n(1) = \frac{a}{n}, \quad \text{and}\\
q_n(0) = \frac{1-b/n}{n}, \qquad &q_n(1) = \frac{b}{n},
\end{align*}
for constants $a, b \geq 0$, one cannot hope to recover the communities exactly, since the graph is not connected with high probability. The notion of ``detection" or ``weak recovery" considered in this regime consists of obtaining community assignments that are positively correlated with the true assignment. It has been shown in the case $K = 2$ that if
\begin{equation}\label{eq: an1}
(a-b)^2 > a+b,
\end{equation}
it is impossible to obtain such an assignment\footnote{We appropriately modify the conditions  to take into account that the community size in our setting is $n$, as opposed to $n/2$.}; whereas if
\begin{equation*}
(a-b)^2 < a+b,
\end{equation*}
obtaining a positively correlated assignment becomes possible~\cite{MosEtal13, Mas14}. 

In order to obtain exact recovery, a simple necessary condition is that the graph must be connected, meaning the probability of having an edge must scale according to $\Omega\left(\frac{\log n}{n}\right)$. This regime was considered in Abbe et al.~\cite{abbe2014exact}, where the probabilities were given by
\begin{align*} 
p_n(0) = \frac{1-a\log n/n}{n}, \qquad &p_n(1) = \frac{a \log n}{n}, \quad \text{and}\\
q_n(0) = \frac{1-b\log n/n}{n}, \qquad &q_n(1) = \frac{b \log n}{n},
\end{align*}
for constants $a, b \geq 0$. In this regime, it was shown \cite{abbe2014exact} that exact recovery of communities is possible if 
\begin{equation*}
\left| \sqrt{a} - \sqrt{b}\right| > 1,
\end{equation*}
and impossible if 
\begin{equation*}
\left| \sqrt{a} - \sqrt{b}\right| < 1.
\end{equation*}
Apart from exact recovery (also known as strong consistency) and weak recovery, a notion of partial recovery (also known as weak consistency) has also been considered~\cite{MosEtal14, AminiEtal13, zhangminimax}. This notion lies between the other two notions of recovery, and only requires the fraction of misclassified nodes to converge in probability to 0 as $n$ becomes large. A very general result for the $K=2$ case, characterizing when exact and partial recovery are possible for the unweighted homogeneous stochastic block model, is provided in Mossel et al.~\cite{MosEtal14}. Zhang and Zhou~\cite{zhangminimax} consider the problem of community detection in a minimax setting with an appropriate loss function, where the parameter space consists of both homogeneous and non-homogeneous stochastic block models, the number of communities may be fixed or growing, and the community sizes need not be exactly equal. In particular, for the case of homogeneous stochastic block models where the community sizes are almost equal and scale as $\frac{n(1+o(1))}{K}$, they show that the loss function decays at the rate of $e^{-(1+o(1))nI/K}$ whenever $\frac{nI}{K} \to \infty$. Here, $I$ is the Renyi divergence of order $\frac{1}{2}$ between the two Bernoulli distributions corresponding to between-community and within-community edges. Furthermore, they show that exact recovery is possible if and only if the loss function is $o(n^{-1})$, whereas partial recovery is possible if and only if it is $o(1)$. The exact recovery bounds achieved in this way match those of Abbe et al.~\cite{abbe2014exact}.

Heimlicher et al.~\cite{HeiEtal12} also conjectured that similar threshold phenomena should exist in the case of the stochastic block model with discrete weights. In particular, Heimlicher et al.~\cite{HeiEtal12} consider the homogeneous case where $K = 2$ and the between-community and within-community connection probabilities scale as $\Theta\left(\frac{1}{n}\right)$. Analogous to  expression \eqref{eq: an1}, they conjectured a threshold in terms of the discrete probabilities such that weak recovery is possible above this threshold and impossible below the threshold. The impossibility of reconstruction below the conjectured threshold was established in Lelarge et al.~\cite{LelEtal13}, and efficient algorithms that achieve weak recovery were provided for a constant above the threshold.

In this paper, we consider the problem of exact recovery in the homogeneous weighted stochastic block model with $K \geq 2$ communities. By definition, the estimator that minimizes the probability of erroneous community assignments is the maximum likelihood estimator: If the maximum likelihood estimator fails to recover the communities with a certain probability, then the probability of error of any other estimator is also lower-bounded by the same probability. Thus, to show impossibility of recovery, it is sufficient to show that the maximum likelihood estimator fails with a nonzero probability. Finally, note that as in the unweighted case, the maximum likelihood estimator in the weighted case is easy to describe in terms of a min-cut graph partition~\cite{LelEtal13}. Let $\cL$ be the class of edge labels, and let $p_n$ and $q_n$ be distributions supported on $\cL$ which describe the probabilities of edge labels for within-community and between-community edges. For an edge with label $\ell \in \cL$, we assign a weight of $\log \left(\frac{p_n(\ell)}{q_n(\ell)} \right)$. The maximum likelihood estimator then seeks to partition the vertices into disjoint communities in such a way that the sum of weights of between-community edges is minimized.

%%
%%%%%

\section{Main results and consequences}
\label{SecMain}

In this section, we present our main results concerning achievability and impossibility of exact recovery, along with several applications.

\subsection{Renyi divergence and achievability}
\label{SecAchievable}

We begin with a result that controls the probability of success for maximum likelihood estimation under the general homogeneous model~\eqref{EqnHomogeneous}, when $K = 2$. Our first theorem relates the probability of failure of maximum likelihood to the Renyi divergence between the distributions for within-community and between-community edge weights.

\begin{theorem}[Proof in Section~\ref{SecThmRenyiML}]
\label{ThmRenyiML}
Consider a stochastic block model with two communities of size $n$, with connection probabilities governed by the model~\eqref{EqnHomogeneous}. Then the probability that the maximum likelihood estimator fails is bounded as
\begin{equation}
\label{eq: mallu1}
\mprob(F) \le \sum_{k=1}^{n/2} \exp\left(2k \left(\log \frac{n}{k} + 1\right) - 2k (n-k) I\right),
\end{equation}
where $I$ is the Renyi divergence of order $\frac{1}{2}$ between the edge weight distributions $p_n(x)$ and $q_n(x)$, given by
\begin{equation*}
I =
\begin{cases}
-2 \log \left(\int_{-\infty}^\infty \sqrt{p_n(x) q_n(x)} dx\right), & \quad \text{for continuous distributions on } \mathbb R, \\
-2\log\sum_{\ell \geq 0} \sqrt{p_n(\ell) q_n(\ell)}, & \quad \text{for discrete distributions on } \mathbb N.
\end{cases}
\end{equation*} 
\end{theorem}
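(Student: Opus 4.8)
The plan is a union bound over all community assignments that could possibly beat the true one under maximum likelihood, with each term of the bound controlled by a Chernoff estimate at the exponent $t=\tfrac12$, which is precisely the parameter that produces the order-$\tfrac12$ Renyi divergence.

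First I would parametrize the competitors. Since the estimator ranges over partitions of the $2n$ vertices into two blocks of size $n$, every competitor differs from the true partition $(V_1,V_2)$ by swapping a subset $S_1\subseteq V_1$ with a subset $S_2\subseteq V_2$ of the same cardinality $k$; after possibly interchanging the two community labels of the competitor, we may assume $1\le k\le n/2$, and for each such $k$ there are at most $\binom{n}{k}^2$ competitors. Recalling the min-cut description of maximum likelihood from Section~\ref{SecBackground}, the estimator prefers a partition maximizing $\sum_{e\text{ within}}\log\frac{p_n(W_e)}{q_n(W_e)}$, so the competitor associated with $(S_1,S_2)$ beats the truth exactly when the net change
\[
Z \;\defn\; \sum_{e:\,\text{between}\to\text{within}}\log\frac{p_n(W_e)}{q_n(W_e)}\;-\;\sum_{e:\,\text{within}\to\text{between}}\log\frac{p_n(W_e)}{q_n(W_e)}
\]
is nonnegative.

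Next I would track which edges change status under the swap. Exactly the $2k(n-k)$ edges joining $S_1$ to $V_1\setminus S_1$ or $S_2$ to $V_2\setminus S_2$ switch from within to between, and exactly the $2k(n-k)$ edges joining $S_1$ to $V_2\setminus S_2$ or $S_2$ to $V_1\setminus S_1$ switch from between to within; every other edge keeps its status. These four edge sets are pairwise disjoint, so the $4k(n-k)$ weights appearing in $Z$ are mutually independent, and under the true model the first group is i.i.d.\ $p_n$ while the second is i.i.d.\ $q_n$. Hence, by Markov's inequality applied to $e^{Z/2}$ and independence,
\[
\mprob\big(\text{competitor }(S_1,S_2)\text{ beats the truth}\big)\;\le\;\E\big[e^{Z/2}\big]\;=\;\rho^{\,4k(n-k)},
\]
where $\rho=\E_{W\sim p_n}\big[(q_n(W)/p_n(W))^{1/2}\big]=\E_{W\sim q_n}\big[(p_n(W)/q_n(W))^{1/2}\big]=\sum_{\ell}\sqrt{p_n(\ell)q_n(\ell)}$ is the Bhattacharyya coefficient (or the analogous integral in the continuous case), which equals $e^{-I/2}$ by the definition of $I$; thus this probability is at most $e^{-2k(n-k)I}$.

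Finally, a union bound over $k$ and over the $\binom{n}{k}^2$ choices of $(S_1,S_2)$ gives $\mprob(F)\le\sum_{k=1}^{n/2}\binom{n}{k}^2 e^{-2k(n-k)I}$, and the elementary bound $\binom{n}{k}\le(en/k)^{k}$ converts the binomial factor into $\exp\!\big(2k(\log(n/k)+1)\big)$, yielding~\eqref{eq: mallu1}. I expect the main obstacle to be the combinatorial bookkeeping of the third step — correctly identifying which edges switch status and verifying the disjointness that makes $\E[e^{Z/2}]$ factor into exactly $4k(n-k)$ identical terms — together with the symmetry reduction to $k\le n/2$ and a clean handling of ties in the argmax; once this is set up, the probabilistic core (Chernoff at $t=\tfrac12$ producing the Renyi divergence) is essentially a one-line computation.
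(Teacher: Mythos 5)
Your proposal follows essentially the same route as the paper's proof: a union bound over the $\binom{n}{k}^2$ swaps of $k$-subsets for $1 \le k \le n/2$, the observation that exactly $2k(n-k)$ independent edges switch status in each direction, a Chernoff/Markov bound at exponent $t=\tfrac12$ (which the paper verifies is the minimizer of the moment generating function) turning the Bhattacharyya coefficient into $e^{-I/2}$, and finally $\binom{n}{k} \le (en/k)^k$. The only ingredient you omit is the paper's separate treatment of $p_n$ and $q_n$ that are not mutually absolutely continuous — handled there by Gaussian smoothing in the continuous case, a geometric perturbation in the discrete case, and continuity of the Renyi divergence in total variation — although your multiplicative formulation $\mprob(e^{Z}\ge 1)\le \E\bigl[e^{Z/2}\bigr]$ in fact largely sidesteps that issue, since the likelihood ratio of the competitor to the truth is almost surely finite under the true model.
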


Note that the general exponential bound in inequality~\eqref{eq: mallu1} decreases with $I$, which corresponds to the distributions $p_n$ and $q_n$ becoming more separated. This corroborates the intuition that the failure probability of maximum likelihood $\mprob(F)$ appearing on the left-hand side of inequality~\eqref{eq: mallu1} should decrease with $I$, since the problem becomes easier to solve as the within-community and between-community distributions become easier to distinguish. \\

Of course, Theorem~\ref{ThmRenyiML} is particularly informative in regimes where we can show that the right-hand side of inequality~\eqref{eq: mallu1} tends to 0, implying that the maximum likelihood estimator succeeds with probability tending to 1. To illustrate this point, we have the following corollary:

\begin{corollary}[Proof in Section~\ref{SecCorRenyi}] 
\label{CorRenyi}
Suppose the Renyi divergence between $p_n$ and $q_n$ satisfies
\begin{equation*}
\liminf_{n \rightarrow \infty} \frac{nI}{\log n} > 1.
\end{equation*}
Then the maximum likelihood estimator succeeds with probability converging to 1 as $n \rightarrow \infty$.
\end{corollary}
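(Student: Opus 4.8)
The plan is to feed the hypothesis into the bound~\eqref{eq: mallu1} of Theorem~\ref{ThmRenyiML} and show that its right-hand side vanishes; since \eqref{eq: mallu1} upper bounds $\mprob(F)$, this yields success with probability tending to $1$. First I would extract from $\liminf_{n\to\infty}\frac{nI}{\log n}>1$ a constant $\delta>0$ and an index $N_0$ with
\[
I \;\ge\; (1+\delta)\,\frac{\log n}{n}\qquad\text{for all } n\ge N_0
\]
(if the $\liminf$ equals $L\in(1,\infty)$, any $\delta<L-1$ works eventually; if $L=\infty$ it is immediate), and all subsequent estimates would be for $n\ge N_0$.

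Next I would substitute this lower bound on $I$ into the exponent of each term of~\eqref{eq: mallu1}. Putting $t=k/n$, which ranges over a subset of $[1/n,1/2]$ as $k$ runs from $1$ to $\lfloor n/2\rfloor$, and dividing the exponent by $2k>0$ gives
\[
\frac{1}{2k}\Bigl(2k\bigl(\log\tfrac{n}{k}+1\bigr)-2k(n-k)I\Bigr)
\;\le\; -\log t + 1 - (1+\delta)(1-t)\log n \;=:\; f(t).
\]
The key observation is that $f$ is convex on $[1/n,1/2]$: indeed $f''(t)=t^{-2}>0$, since the only nonlinear term is $-\log t$. A convex function on a closed interval attains its maximum at an endpoint, so $f(t)\le\max\{f(1/n),\,f(1/2)\}$ for every admissible $t$. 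A direct computation gives $f(1/n)=-\delta\log n+1+(1+\delta)\frac{\log n}{n}$ and $f(1/2)=\log 2+1-\frac{1+\delta}{2}\log n$, both of which tend to $-\infty$; set $g(n):=-\max\{f(1/n),f(1/2)\}$, so that $g(n)\to\infty$.

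Finally, each exponent in~\eqref{eq: mallu1} is at most $-2k\,g(n)$, so once $n$ is large enough that $g(n)>0$,
\[
\mprob(F)\;\le\;\sum_{k=1}^{\lfloor n/2\rfloor} e^{-2k g(n)}\;\le\;\sum_{k=1}^{\infty} e^{-2k g(n)}\;=\;\frac{e^{-2g(n)}}{1-e^{-2g(n)}},
\]
which converges to $0$ as $n\to\infty$. Hence the maximum likelihood estimator succeeds with probability tending to $1$.

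The step I would flag as the crux is this normalization-plus-convexity move. The bound~\eqref{eq: mallu1} has an apparent $k$-dependent tension — the factor $\log(n/k)$ is small exactly when $k$ is large, while $n-k$ is close to $n$ exactly when $k$ is small — and it is not obvious a priori that both effects can be controlled simultaneously. Dividing by $2k$ and recognizing convexity in $t=k/n$ collapses the entire sum to the two boundary cases $k=1$ and $k=n/2$, where the verification is immediate. One could instead split the sum at $k\asymp\delta n$ and bound the two ranges separately (a geometric tail for small $k$, and at most $n/2$ super-exponentially small terms for large $k$), but the convexity route avoids juggling constants. Everything else is routine estimation.
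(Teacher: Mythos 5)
Your proof is correct, and it follows the same overall route as the paper: extract $I\ge(1+\delta)\frac{\log n}{n}$ for large $n$, feed it into the bound~\eqref{eq: mallu1}, and show the resulting sum vanishes. Where you differ is in how the sum is controlled. The paper splits it at $k=3$, uses the monotonicity of $\frac{\log x}{x}$ to establish $\log k-\frac{k(1+\epsilon)\log n}{n}\ge\frac{\log k}{3}$ on the range $3\le k\le n/2$, and then bounds the two pieces separately with a handful of explicit estimates. Your normalization by $2k$ followed by the convexity of $f(t)=-\log t+1-(1+\delta)(1-t)\log n$ in $t=k/n$ collapses the whole sum to the two endpoint evaluations $f(1/n)$ and $f(1/2)$ and a single geometric series; this is cleaner and yields the same quantitative conclusion, $\mprob(F)=O(n^{-2\delta})$, matching the paper's $C_2 n^{-2\epsilon}$. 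The convexity step and the endpoint computations check out, so there is no gap.
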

We will discuss the implications of Corollary~\ref{CorRenyi} in various scenarios in the sections below. We also have a version of Theorem~\ref{ThmRenyiML} that is applicable to the case of more than two communities. We state and prove the more general theorem separately, since the argument for $K = 2$ is substantially simpler.

\begin{theorem}[Proof in Section \ref{SecThmRenyiK}]
\label{ThmRenyiK}
Consider a stochastic block model with $K$ communities of size $n$, with connection probabilities governed by the model~\eqref{EqnHomogeneous}. Then the probability that the maximum likelihood estimator fails is bounded as
\begin{equation}
\label{EqnDucky}
\mprob(F) \le \sum_{m=1}^{\lfloor n/2 \rfloor} \min \left\{ \left( \frac{enK^2}{m}\right)^m, \; K^{nK}\right\} e^{(-nm + m^2)I} + \sum_{m=\lfloor n/2 \rfloor + 1}^{nK} \min \left\{ \left( \frac{enK^2}{m}\right)^m, \; K^{nK}\right\} e^{-\frac{2mn}{9} I},
\end{equation}
where $I$ is the Renyi divergence of order $\frac{1}{2}$ between the edge weight distributions $p_n(x)$ and $q_n(x)$. In particular, if
\begin{equation}
\label{EqnPiggy}
\liminf_{n \rightarrow \infty} \frac{nI}{\log n} > 1,
\end{equation}
then the maximum likelihood estimator succeeds with probability converging to 1 as $n \rightarrow \infty$.
\end{theorem}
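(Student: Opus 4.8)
The plan is to bound $\mprob(F)$ by a union bound over all community assignments $\sigma'$ that are not relabelings of the true $\sigma$, organized according to $m = m(\sigma')$, the minimum over label permutations $\pi$ of the number of vertices on which $\sigma'$ and $\pi\circ\sigma$ disagree; the genuine competitors are exactly those with $m\ge 1$. The number of $\sigma'$ with $m(\sigma') = m$ is at most $\big(\tfrac{enK^2}{m}\big)^m$ — choose the $m$ misclassified vertices, their $K-1$ possible erroneous labels each, and the aligning permutation, absorbing lower-order factors — and also at most $K^{nK}$, the total number of labelings; this yields the $\min\{\cdot,\cdot\}$ prefactor in \eqref{EqnDucky}. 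Thus $\mprob(F) \le \sum_{m\ge 1}\min\{(enK^2/m)^m,\,K^{nK}\}\cdot\max_{\sigma':\,m(\sigma')=m}\mprob(\log L(\sigma')\ge\log L(\sigma))$.

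Fix such a competitor $\sigma'$. The log-likelihood gap $\log L(\sigma') - \log L(\sigma) = \sum_{e\in E_{\mathrm{diff}}}Z_e$ is a sum of independent terms indexed by the set $E_{\mathrm{diff}}$ of edges whose within-versus-between status differs between $\sigma$ and $\sigma'$; every other edge contributes $Z_e = 0$. Applying Markov's inequality to $\exp(\tfrac12\sum Z_e)$ — the Chernoff bound at parameter $\tfrac12$ — gives $\mprob(\log L(\sigma')\ge\log L(\sigma)) \le \prod_{e\in E_{\mathrm{diff}}}\E[e^{Z_e/2}]$. If an edge is within-community under $\sigma$ and between-community under $\sigma'$, then $W_e\sim p_n$ and $Z_e = \log\frac{q_n(W_e)}{p_n(W_e)}$, so $\E[e^{Z_e/2}] = \sum_\ell\sqrt{p_n(\ell)q_n(\ell)} = e^{-I/2}$; the reverse type of edge gives the identical value, hence $\mprob(\log L(\sigma')\ge\log L(\sigma))\le e^{-|E_{\mathrm{diff}}|\,I/2}$. (Labels in the support of only one of $p_n$ and $q_n$ force a likelihood ratio to be infinite and only help recovery; I would dispose of them at the outset.)

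The crux is a combinatorial lower bound on $|E_{\mathrm{diff}}|$ in terms of $m$. Writing $N_{a,b}$ for the confusion matrix of $\sigma$ and $\sigma'$ (so $\sum_b N_{a,b} = n$) and $c_b = \sum_a N_{a,b}$, one has the exact identity $|E_{\mathrm{diff}}| = K\binom{n}{2} + \sum_b\binom{c_b}{2} - 2\sum_{a,b}\binom{N_{a,b}}{2}$. When $m\le n/2$, writing $N_{a,a} = n - m_a$ with $\sum_a m_a = m$ and using $\sum_{b\ne a}N_{a,b}^2\le m_a^2$ together with $\sum_b c_b^2\ge n^2K$ gives $|E_{\mathrm{diff}}|\ge 2m(n-m)$, which produces the exponent $(nm - m^2)I$. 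For $m > n/2$ the bound $2m(n-m)$ degrades and eventually turns negative, and the essential point is that $m$ is measured against the \emph{optimal} alignment: no transposition of two community names increases $\sum_a N_{a,a}$, which forces a quantitative diagonal dominance on $N$. I expect that feeding this constraint into the identity above and balancing it against the column term $\sum_b c_b^2$ yields $|E_{\mathrm{diff}}|\ge\tfrac{4mn}{9}$ for all $m > n/2$, hence the exponent $\tfrac{2mn}{9}I$; the threshold $n/2$ and the constant $\tfrac29$ are artifacts of this balancing. This large-$m$ estimate is the step I anticipate being the main obstacle, since one can no longer argue locally around the misclassified vertices and must use the global structure of the optimal alignment.

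For the final assertion, assume $nI\ge(1+\epsilon)\log n$ for all large $n$. In the first sum the $m$-th term is at most $\exp\big(m(\log\tfrac{enK^2}{m} - (n-m)I)\big)$; writing $x = m/n$, the coefficient of $m$ in the exponent is at most $\log(eK^2) + \log\tfrac1x - (1-x)(1+\epsilon)\log n$, which one checks is $\le -c\log n$ uniformly over $1\le m\le n/2$ for some $c = c(\epsilon) > 0$ (the term $\log\tfrac1x$ cannot be discarded), so the first sum is a convergent geometric series bounded by $O(n^{-c})\to 0$. In the second sum, bounding the prefactor by $K^{nK}$ and using $\tfrac{2mn}{9}I\ge\tfrac{n(1+\epsilon)}{9}\log n$ for $m\ge n/2$ bounds each of the at most $nK$ terms by $\exp\big(nK\log K - \tfrac{n(1+\epsilon)}{9}\log n\big)\to 0$. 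Hence $\mprob(F)\to 0$ and the maximum likelihood estimator succeeds with probability tending to $1$.
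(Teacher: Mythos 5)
Your overall architecture is the same as the paper's (which itself follows Zhang and Zhou): a union bound over equivalence classes of assignments stratified by the permutation-minimized Hamming distance $m$, the counting bound $\min\{(enK^2/m)^m, K^{nK}\}$, a Chernoff bound at parameter $\tfrac12$ producing the Renyi divergence, and a combinatorial lower bound on the number of edges whose within/between status flips. Your Chernoff step is stated in the correct form, $e^{-|E_{\mathrm{diff}}|I/2}$ with $|E_{\mathrm{diff}}| = \alpha + \gamma$ (the paper's intermediate display writes $e^{-(\gamma+\alpha)I}$, a factor-of-two slip that is harmless because it then passes to $e^{-\min(\gamma,\alpha)I}$). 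Your identity for $|E_{\mathrm{diff}}|$ in terms of the confusion matrix is correct, and your small-$m$ computation does yield $|E_{\mathrm{diff}}|\ge 2m(n-m)$, which is exactly what the exponent $(nm-m^2)I$ requires. Your verification of the ``in particular'' claim is also sound, and is organized a bit differently from the paper's: you control the first sum uniformly over $1\le m\le n/2$ via convexity of the per-$m$ rate in $x=m/n$, whereas the paper re-splits the sum at $m'=\lfloor \epsilon n/2\rfloor$ and uses the weaker $\tfrac{2mn}{9}$ exponent on the middle range; both work.

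The one genuine gap is the step you yourself flag: the lower bound $|E_{\mathrm{diff}}|\ge \tfrac{4mn}{9}$ for $m>n/2$ is asserted as an expectation, not proved, and it is not a routine continuation of the small-$m$ argument --- as you note, for large $m$ one must exploit that $m$ is measured against the optimal label alignment (no transposition of community names increases $\sum_a N_{a,a}$), and turning that diagonal-dominance constraint into a quantitative bound is the nontrivial part. The paper does not reprove this either; it invokes Lemma 5.3 of Zhang and Zhou, which gives $\min(\alpha,\gamma)\ge nm-m^2$ for $m\le n/2$ and $\min(\alpha,\gamma)\ge \tfrac{2nm}{9}$ for $m>n/2$. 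Since $\alpha+\gamma\ge 2\min(\alpha,\gamma)$, that cited lemma delivers precisely the two bounds on $|E_{\mathrm{diff}}|$ that your argument needs, so the gap is closed by citation (or by reproducing Zhang--Zhou's proof); as written, though, the large-$m$ half of the theorem is not established. A minor secondary point: your parenthetical dismissal of labels outside the common support of $p_n$ and $q_n$ should be made precise if you want the theorem in the generality stated, since an infinite likelihood ratio in the wrong direction (a label with $q_n(\ell)>0=p_n(\ell)$ appearing on a within-community edge) does not ``help recovery'' for the likelihood comparison as written; the paper sidesteps this for $K=2$ by a perturbation argument in the proof of Theorem~\ref{ThmRenyiML}.
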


The proof of Theorem~\ref{ThmRenyiK} builds upon the arguments of Zhang and Zhou~\cite{zhangminimax} and extends them to more general distributions.

%\begin{theorem}\label{ThmAchievableK}
%In a SBM with $K \geq 2$ communities each of size $n$, maximum likelihood succeeds with %probability tending to 1 when $$ \sum_{\ell = 1}^L \big(\sqrt a_i - \sqrt b_i\big)^2 > 1.$$ 
%\end{theorem}
%\begin{proof}
%We'll essentially repeat the proof of Theorem 3.2 in \cite{zhangminimax}. 
%Let $\epsilon$ be such that $$\sum_{\ell = 1}^L \left(\sqrt a_i - \sqrt b_i) \right)^2 > 1+\epsilon.$$
%\end{proof}

%%

\subsection{Thresholds for weighted stochastic block models}
\label{SecDiscrete}

In this section, we derive a threshold phenomenon for exact recovery in the case when $p_n$ and $q_n$ are discrete distributions. Analogous to the scenario considered in \cite{abbe2014exact}, we now concentrate on the regime where the probability of having an edge scales as $\Theta\left(\frac{\log n}{n}\right)$. However, in addition to Bernoulli distributions, our framework accommodates distributions on a larger alphabet, denoted by the set $\{0, 1, \dots, L\}$ for $L \geq 1$. Thus, instead of simply observing the presence or absence of an edge, we may also observe the corresponding \emph{color} or \emph{weight} of the edge. We define the distributions $\{p_n, q_n\}$ as follows: For two vectors $\mathbf a = [a_1, a_2, \dots, a_L]$ and $\mathbf b = [b_1, b_2, \dots, b_L]$ in $\mathbb R_+^L$, define
\begin{align}
p_n(0) = 1 - \frac{u\log n}{n}, \quad  \text{and} \quad & p_n(\ell) = \frac{a_\ell\log n}{n}, \quad \forall 1 \leq \ell \leq L,\\
q_n(0) = 1 - \frac{v\log n}{n}, \quad \text{and} \quad & q_n(\ell) = \frac{b_\ell\log n}{n}, \quad \forall 1 \leq \ell \leq L,
\end{align}
where $u = \sum_{\ell=1}^L a_\ell$ and $v = \sum_{\ell=1}^L b_\ell$. We wish to determine a criterion in terms of $\mathbf a$ and $\mathbf b$ that describes when it is possible to to exactly determine the communities in this model. 

Our first result is the following theorem guaranteeing the success of the maximum likelihood estimator:
\begin{theorem}[Proof in Section~\ref{SecThmAchievable}]
\label{ThmAchievable}
Suppose
\begin{equation}
\label{EqnRenyiAB}
\sum_{\ell = 1}^L \left(\sqrt a_\ell - \sqrt b_\ell\right)^2 > 1.
\end{equation}
Then the maximum likelihood estimator recovers the communities exactly with probability converging to 1 as $n \rightarrow \infty$.
\end{theorem}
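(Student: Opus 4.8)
The plan is to reduce Theorem~\ref{ThmAchievable} to the achievability part of Corollary~\ref{CorRenyi} (equivalently, to Theorem~\ref{ThmRenyiML}) by computing the Renyi divergence of order $\frac12$ between $p_n$ and $q_n$ in the specified scaling regime, and verifying that hypothesis~\eqref{EqnRenyiAB} is exactly the condition $\liminf_{n\to\infty}\frac{nI}{\log n}>1$. First I would write out
\begin{equation*}
\sum_{\ell\ge 0}\sqrt{p_n(\ell)q_n(\ell)} = \sqrt{\left(1-\tfrac{u\log n}{n}\right)\left(1-\tfrac{v\log n}{n}\right)} + \sum_{\ell=1}^L \sqrt{\tfrac{a_\ell\log n}{n}\cdot\tfrac{b_\ell\log n}{n}},
\end{equation*}
and then Taylor-expand: the $\ell=0$ term is $1 - \tfrac{(u+v)\log n}{2n} + o\!\left(\tfrac{\log n}{n}\right)$, while the remaining sum is $\tfrac{\log n}{n}\sum_{\ell=1}^L\sqrt{a_\ell b_\ell}$. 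Adding these gives
\begin{equation*}
\sum_{\ell\ge 0}\sqrt{p_n(\ell)q_n(\ell)} = 1 - \frac{\log n}{n}\left(\frac{u+v}{2} - \sum_{\ell=1}^L\sqrt{a_\ell b_\ell}\right) + o\!\left(\frac{\log n}{n}\right) = 1 - \frac{\log n}{2n}\sum_{\ell=1}^L\left(\sqrt{a_\ell}-\sqrt{b_\ell}\right)^2 + o\!\left(\frac{\log n}{n}\right),
\end{equation*}
using $u+v = \sum_\ell(a_\ell+b_\ell)$. Taking $I = -2\log\big(\sum_\ell\sqrt{p_n(\ell)q_n(\ell)}\big)$ and using $-2\log(1-x) = 2x + o(x)$ yields
\begin{equation*}
I = \frac{\log n}{n}\sum_{\ell=1}^L\left(\sqrt{a_\ell}-\sqrt{b_\ell}\right)^2 + o\!\left(\frac{\log n}{n}\right),
\end{equation*}
so that $\lim_{n\to\infty}\frac{nI}{\log n} = \sum_{\ell=1}^L(\sqrt{a_\ell}-\sqrt{b_\ell})^2$.

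Once this asymptotic identity is in hand, the theorem is immediate: assumption~\eqref{EqnRenyiAB} says precisely that this limit exceeds $1$, hence $\liminf_{n\to\infty}\frac{nI}{\log n} > 1$, and Corollary~\ref{CorRenyi} (or directly Theorem~\ref{ThmRenyiK} with $K=2$, invoking condition~\eqref{EqnPiggy}) then gives that the maximum likelihood estimator succeeds with probability converging to $1$. I would state this reduction explicitly, citing Corollary~\ref{CorRenyi}, rather than re-deriving the summation bound~\eqref{eq: mallu1}.

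The main obstacle is not conceptual but bookkeeping: one must be careful that the error terms in the Taylor expansions are genuinely $o\!\left(\tfrac{\log n}{n}\right)$ and do not accumulate across the $L+1$ summands in a way that changes the leading coefficient. Since $L$ is fixed and the $a_\ell, b_\ell$ are fixed constants, each correction term is $O\!\left(\big(\tfrac{\log n}{n}\big)^2\right)$, and summing finitely many of them preserves the $o\!\left(\tfrac{\log n}{n}\right)$ bound; I would make this quantitative by noting $\sqrt{(1-x)(1-y)} = 1-\tfrac{x+y}{2}+O(x^2+y^2+xy)$ for small $x,y$ with $x=\tfrac{u\log n}{n}$, $y=\tfrac{v\log n}{n}$. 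A secondary point worth a remark is that $p_n$ and $q_n$ are genuine probability distributions for $n$ large enough (so that $\tfrac{u\log n}{n},\tfrac{v\log n}{n}\le 1$), which is all that is needed for the previously established theorems to apply. No deeper difficulty arises, so the proof is essentially this computation followed by the appeal to Corollary~\ref{CorRenyi}.
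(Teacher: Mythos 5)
Your proposal is correct and follows essentially the same route as the paper: compute the expansion $I = \bigl(\sum_{\ell=1}^L (\sqrt{a_\ell}-\sqrt{b_\ell})^2\bigr)\tfrac{\log n}{n} + O\bigl(\tfrac{\log^2 n}{n^2}\bigr)$ and then invoke Corollary~\ref{CorRenyi} (for $K=2$) and Theorem~\ref{ThmRenyiK} (for general $K$). The algebra, including the identity $\tfrac{u+v}{2}-\sum_\ell\sqrt{a_\ell b_\ell}=\tfrac12\sum_\ell(\sqrt{a_\ell}-\sqrt{b_\ell})^2$, matches the computation in the paper's Section~\ref{SecThmAchievable}.
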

We note that the expression on the left-hand side of inequality~\eqref{EqnRenyiAB} is increasing in $L$, agreeing with the intuition that the exact recovery problem becomes easier when more edge colors are available: Given a graph with $L$ edge colors, we may always erase certain colors to obtain a new graph with $L' < L$ colors, and then apply a maximum likelihood estimator to the new graph. The probability of success of this estimator must be at least as large as the probability of success of a maximum likelihood estimator applied to the original graph; in particular, if
\begin{equation}
\label{EqnReduction}
\sum_{\ell=1}^{L'} \left(\sqrt{a_\ell} - \sqrt{b_\ell}\right)^2 > 1,
\end{equation}
implying that maximum likelihood succeeds with probability converging to 1 on the graph with $L'$ colors, the probability of success of maximum likelihood on the graph with $L$ colors must also converge to 1. Indeed, inequality~\eqref{EqnReduction} implies inequality~\eqref{EqnRenyiAB}, since $L' < L$. Similarly, we may check that by the Cauchy-Schwarz inequality, the following relation holds:
\begin{equation*}
\left(\sqrt{\sum_{\ell=1}^L a_\ell} - \sqrt{\sum_{\ell=1}^L b_\ell}\right)^2 \le \sum_{\ell=1}^L \left(\sqrt{a_\ell} - \sqrt{b_\ell}\right)^2.
\end{equation*}
This captures the fact that if the maximum likelihood estimator succeeds with probability converging to 1 on a graph with $L$ colors when we replace all occurring edges with a single color, then the maximum likelihood estimator on the original graph should also succeed with probability converging to 1.

\begin{remark}
Examining the proof of Theorem~\ref{ThmAchievable}, we may see that it is not necessary for the number of colors $L$ to be finite. Indeed, as long as we have
\begin{equation*}
\sum_{\ell=1}^\infty \left(\sqrt{a_\ell} - \sqrt{b_\ell}\right)^2 > 1,
\end{equation*}
in the infinite case, we will also have $\liminf_{n \rightarrow \infty} \frac{nI}{\log n} > 1$, implying the desired result.
\end{remark}

As will be seen in the proof of Theorem~\ref{ThmAchievable} below, we have the characterization
\begin{equation*}
I = \left( \sum_{\ell=1}^L \left(\sqrt a_\ell - \sqrt b_\ell\right)^2 \right) \frac{\log n}{n} + O\left(\frac{\log^2 n}{n^2}\right)
\end{equation*}
of the Renyi divergence. Hence, inequality~\eqref{EqnRenyiAB} governs whether $I < \frac{\log n}{n}$ or $I > \frac{\log n}{n}$, for large $n$. As will be illustrated in the computation appearing in the proof of Theorem~\ref{ThmAchievable}, the inequality $I > \frac{\log n}{n}$ implies that the right side of inequality~\eqref{eq: mallu1} tends to 0 as $n \rightarrow \infty$. On the other hand, the next theorem guarantees that if $I < \frac{\log n}{n}$, we have $\mprob(F)$ bounded away from 0. Hence, the success or failure of maximum likelihood occurs with respect to a sharp threshold that is encoded within the Renyi divergence. In the next theorem, we will make the additional assumption that
\begin{equation}
\label{EqnAbsCont}
a_\ell, b_\ell > 0, \qquad \forall 1 \le \ell \le L,
\end{equation}
meaning the probabilities of all $L$ colors are nonzero both within and between communities. 

\begin{theorem}[Proof in Section~\ref{SecThmImpossible}]
\label{ThmImpossible}
Suppose the condition~\eqref{EqnAbsCont} holds. If
\begin{equation*}
\sum_{\ell=1}^L \left(\sqrt a_\ell - \sqrt b_\ell\right)^2 < 1,
\end{equation*}
then for any $K \ge 2$ and for sufficiently large $n$, the maximum likelihood estimator fails with probability at least $\frac{1}{3}$.
\end{theorem}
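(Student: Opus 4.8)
The plan is to prove impossibility by producing, with probability at least $\frac13$, a relabeling of the vertices whose likelihood is at least that of the ground truth $\sigma$. Since only a lower bound on the failure probability is needed, it suffices to work with two fixed communities $C_1,C_2$ and look for a pair $v\in C_1$, $w\in C_2$ whose swap does not decrease the likelihood; edges incident to the other $K-2$ communities are untouched by such a swap, so the argument is insensitive to $K$. Let $\Delta_v:=\sum_{u\in C_2}\log\frac{p_n(W_{vu})}{q_n(W_{vu})}-\sum_{u\in C_1\setminus\{v\}}\log\frac{p_n(W_{vu})}{q_n(W_{vu})}$ be the log-likelihood gain of moving $v$ from $C_1$ to $C_2$, and define $\Delta_w$ analogously for moving $w$ from $C_2$ to $C_1$. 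A short computation shows the log-likelihood change caused by swapping $v$ and $w$ equals $\Delta_v+\Delta_w-2\log\frac{p_n(W_{vw})}{q_n(W_{vw})}$, since the edge $(v,w)$ stays a between-community edge. By assumption~\eqref{EqnAbsCont} all $L+1$ log-ratios are bounded and the color-$0$ one tends to $0$, so $\log\frac{p_n(W_{vw})}{q_n(W_{vw})}\le M$ for a constant $M$ and large $n$; hence it is enough to show that with probability tending to $1$ there exist $v\in C_1$ with $\Delta_v\ge M$ and $w\in C_2$ with $\Delta_w\ge M$.

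The core estimate is a large-deviation lower bound for a single vertex: $\mprob(\Delta_v\ge M)\ge n^{-c-o(1)}$, where $c:=\sum_{\ell=1}^L(\sqrt{a_\ell}-\sqrt{b_\ell})^2$ (we may assume $c>0$; otherwise $p_n\equiv q_n$ and the claim is trivial). First I would Poissonize, replacing the color counts of the $\approx 2n$ relevant edges by independent Poisson variables (legitimate in the $\Theta(\log n/n)$ regime) and checking that the color-$0$ contribution to $\Delta_v$ is only $o(1)$, so that up to that error $\Delta_v=\sum_{\ell\ge 1}\log\frac{a_\ell}{b_\ell}\,(N^{(2)}_\ell-N^{(1)}_\ell)$ with $N^{(1)}_\ell\sim\mathrm{Poisson}(a_\ell\log n)$, $N^{(2)}_\ell\sim\mathrm{Poisson}(b_\ell\log n)$, all independent. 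Then I would tilt: under the law $\mathbb Q$ that replaces every parameter $a_\ell\log n$ and $b_\ell\log n$ by the geometric mean $\sqrt{a_\ell b_\ell}\log n$, the two sums have equal mean, so $\E_{\mathbb Q}[\Delta_v]=0$ and, since the variance is $\Theta(\log n)\to\infty$, the central limit theorem gives $\mathbb Q(\Delta_v\ge M)\to\frac12$; meanwhile the relative-entropy cost is $D(\mathbb Q\,\|\,\mprob)=\big(\sum_\ell(\sqrt{a_\ell}-\sqrt{b_\ell})^2\big)\log n\,(1+o(1))=c\log n\,(1+o(1))$. The change-of-measure inequality then yields $\mprob(\Delta_v\ge M)\ge e^{-(1+o(1))c\log n}$. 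This is where the Renyi divergence of order $\frac12$ resurfaces: $c\log n=(1+o(1))\,nI$ by the expansion of $I$ recorded after Theorem~\ref{ThmAchievable}, so the same quantity controlling achievability also controls the converse. (The matching upper bound $\mprob(\Delta_v\ge M)\le n^{-c+o(1)}$ is just the Chernoff bound at parameter $\frac12$ used in the proof of Theorem~\ref{ThmRenyiML}.)

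A second-moment argument then finishes the proof. Put $N_1=\#\{v\in C_1:\Delta_v\ge M\}$, so $\E[N_1]=n\,\mprob(\Delta_v\ge M)\ge n^{1-c-o(1)}\to\infty$ because $c<1$. For distinct $v,v'\in C_1$, the variables $\Delta_v$ and $\Delta_{v'}$ depend on disjoint edge sets except for the single edge $(v,v')$; conditioning on its color shifts the threshold in each factor by at most $M$, which perturbs $\mprob(\Delta_v\ge\,\cdot\,)$ by a bounded factor, and since that edge has color $0$ (with an $o(1)$ log-ratio) with probability $1-o(1)$, one obtains $\mprob(\Delta_v\ge M,\ \Delta_{v'}\ge M)=(1+o(1))\,\mprob(\Delta_v\ge M)^2$. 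Hence $\E[N_1^2]=(1+o(1))\E[N_1]^2$ and $\mprob(N_1\ge 1)\to 1$ by Paley--Zygmund; the symmetry exchanging the labels of $C_1$ and $C_2$ gives $\mprob(N_2=0)=\mprob(N_1=0)\to 0$ as well, so $\mprob(N_1\ge 1,\ N_2\ge 1)\ge 1-2\,\mprob(N_1=0)\ge\frac13$ for large $n$. Combined with the first paragraph, the maximum likelihood estimator fails with probability at least $\frac13$.

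The main obstacle is the large-deviation \emph{lower} bound of the second paragraph: the upper bound is routine, but pinning the exponential rate to exactly $c\log n=(1+o(1))nI$ requires making the Poissonization rigorous, controlling the discarded color-$0$ terms, and verifying that the tilted event keeps probability bounded away from $0$. A secondary point is securing the sharp $(1+o(1))$ factor in the pairwise correlation bound for the second moment; the slack there comes precisely from $c<1$, which makes $\E[N_1]$ grow polynomially in $n$.
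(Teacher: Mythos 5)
Your proposal is correct in substance, and its core large-deviation lower bound is the same as the paper's: both tilt the law of the log-likelihood-ratio increments to the order-$\frac{1}{2}$ (geometric-mean) measure, use a CLT under the tilted law to keep the target event at probability $\Theta(1)$, and identify the exponential cost as $(1+o(1))\,c\log n=(1+o(1))\,nI$ with $c=\sum_{\ell}(\sqrt{a_\ell}-\sqrt{b_\ell})^2$. (The paper does this directly on the i.i.d.\ sum, restricting to a window $[\omega(n),\eta(n))$ as in Lemmas~\ref{lemma: horrible} and~\ref{lemma: clt}; your Poissonization is an extra, harmless step.) Where you genuinely diverge is in upgrading the single-vertex estimate $\mprob(\Delta_v\ge M)\ge n^{-c-o(1)}$ to the existence of a good vertex in each community. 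You run a second-moment/Paley--Zygmund argument over all $n$ vertices of $C_1$, decorrelating $\Delta_v$ and $\Delta_{v'}$ by conditioning on their single shared edge. The paper avoids second moments entirely: it restricts to a subset $H\subset A$ of size $n/\gamma(n)$ with $\gamma(n)=(\log n)^{\log^{2/3}n}$, shows via Lemma~\ref{lemma: delta} that the edges internal to $H$ contribute only $\delta(n)=o(\sqrt{\log n})$ to each score, and then the truncated events $F_H^{(i)}$ are exactly independent, giving $\mprob(F_H)=1-(1-\rho(n))^{n/\gamma(n)}\to 1$ since $\rho(n)\gg\gamma(n)/n$. Your route yields a nominally stronger intermediate conclusion (each $\mprob(N_j\ge 1)\to 1$) at the cost of the pairwise correlation estimate, whose constant you rightly flag as the delicate point: matching exponents $n^{-2c+o(1)}$ alone would not suffice for Paley--Zygmund, and one really does need the conditioning-on-color-$0$ argument (shift $d_n(0)=o(1)$, perturbation factor $1+o(1)$) together with the observation that a nonzero shared color occurs with probability $O(\log n/n)$ and, by a Chernoff bound with an $O(1)$ threshold shift, contributes only $o(\E[N_1]^2)$ to the cross term; here the lower bound $a_\ell,b_\ell>0$ of condition~\eqref{EqnAbsCont} is used to bound $|d_n(\ell)|$ in both directions. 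Modulo carrying out those estimates (which parallel the work the paper puts into Lemmas~\ref{lemma: delta} and~\ref{lemma: horrible}), your argument is sound.
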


Viewed from another angle, Theorems~\ref{ThmAchievable} and~\ref{ThmImpossible} imply that the quantity $\sum_{\ell=1}^L \left(\sqrt{a_\ell} - \sqrt{b_\ell}\right)^2$ determines a sharp threshold for when exact recovery is possible in the $K$-community weighted stochastic block model; when the quantity is larger than 1, the maximum likelihood estimator succeeds with probability converging to 1, whereas when the quantity is smaller than 1, the maximum likelihood estimator fails with probability bounded away from 0. Also note that the quantity is a sort of Hellinger distance between $\mathbf a$ and $\mathbf b$, although $\mathbf a$ and $\mathbf b$ need not be the probability mass functions of discrete distributions, since their components do not necessarily sum to 1.

\begin{remark}
The assumption~\eqref{EqnAbsCont} appears to be an undesirable artifact of the technique used to prove Theorem~\ref{ThmImpossible}, which involves bounding appropriate functions of the likelihood ratio between within-community and between-community distributions. However, it appears that a substantially different approach may be required to handle the case when assumption~\eqref{EqnAbsCont} does not necessarily hold. Furthermore, note that our argument also requires the likelihood ratio to be bounded by some constant $\cM$. Hence, although our impossibility proof continues to hold when $L$ is infinite, we will need to assume a bound of the form
\begin{equation*}
\sup_{\ell \ge 0} \left\{ \log \left( \frac{p_n(\ell)}{q_n(\ell)} \right)\right\} \le \cM
\end{equation*}
to establish the impossibility result when $L$ is infinite. (Such a bound clearly holds for finite values of $L$.)
\end{remark}

We also note that the results of Theorems~\ref{ThmAchievable} and~\ref{ThmImpossible} could be generalized further to include a mixture of discrete and continuous distributions. In other words, the distributions of $p_n(x)$ and $q_n(x)$ could follow arbitrary (discrete or continuous) distributions for the nonzero values, as long as
\begin{equation*}
p_n(0) = 1 - \frac{u \log n}{n}, \qquad \text{and} \qquad q_n(0) = 1 - \frac{v \log n}{n}.
\end{equation*}
This reflects the fact that the graph is still fairly sparse, with average degree scaling as $\Theta(\log n)$. However, whenever two nodes are connected by an edge, the distribution of the corresponding edge may follow a more general distribution.

\subsection{Censored block models and graphical channels}

We now discuss the relationship between our results and the notion of graphical channels introduced by Abbe and Montanari~\cite{AbbMon13}. Recall that a graphical channel takes as input a labeling of vertices on a graph, and each edge is encoded by a deterministic function of the adjacent vertices. The edges are then passed through a channel, and the output is observed.

Abbe et al.~\cite{AbbEtal14} analyze a specific instantiation of a discrete graphical channel known as the \emph{censored block model}. In this case, the node labelings are binary, and edges are encoded using the XOR operation on adjacent vertices. The channel is a discrete memoryless channel with output alphabet $\{\star, 0, 1\}$, and for fixed probabilities $p, q_1, q_2 \in [0,1]$, the transition matrix of the channel is given by
\begin{equation*}
\begin{blockarray}{cccc}
& \star & 0 & 1 \\
\begin{block}{c(ccc)}
0 & 1-p & p(1-q_1) & pq_1 \\
1 & 1-p & p(1-q_2) & pq_2 \\
\end{block}
\end{blockarray}.
\end{equation*}
In other words, an edge is replaced by $\star$ with probability $1-p$, and is otherwise flipped with probability $q_1$ or $1-q_2$, depending on whether the transmitted edge label is 0 or 1. Clearly, the observed graph may be viewed as a special case of the discrete model described in Section~\ref{SecDiscrete}, with $K = 2$ and $L = 2$, where $\star$ represents an empty edge and the two ``colors" are represented by 0 and 1. This leads to the following result, a corollary of Theorems~\ref{ThmAchievable} and~\ref{ThmImpossible}:
\begin{corollary}
\label{CorCensored}
In the censored block model, suppose
\begin{equation*}
\liminf_{n \rightarrow \infty} \left\{\frac{pn}{\log n} \left[\left(\sqrt{1-q_1} - \sqrt{1-q_2}\right)^2 + \left(\sqrt{q_1} - \sqrt{q_2}\right)^2\right]\right\} > 1.
\end{equation*}
Then the maximum likelihood estimator succeeds with probability converging to 1 as $n \rightarrow \infty$. On the other hand, if
\begin{equation*}
\limsup_{n \rightarrow \infty} \left\{\frac{pn}{\log n} \left[\left(\sqrt{1-q_1} - \sqrt{1-q_2}\right)^2 + \left(\sqrt{q_1} - \sqrt{q_2}\right)^2\right]\right\} < 1,
\end{equation*}
then the maximum likelihood estimator fails with probability bounded away from 0.
\end{corollary}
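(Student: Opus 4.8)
The plan is to recognize the censored block model as an instance of the discrete weighted stochastic block model of Section~\ref{SecDiscrete} with $K = 2$ and $L = 2$, and then to read off the conclusion from Theorems~\ref{ThmAchievable} and~\ref{ThmImpossible} after an explicit change of variables. First I would compute the induced within- and between-community edge distributions. Writing the binary node labels as $x_i \in \{0,1\}$ and recalling that the edge $(i,j)$ transmits $x_i \oplus x_j$, an edge with both endpoints in the same community transmits the symbol $0$, while an edge with endpoints in different communities transmits $1$; reading off the two rows of the transition matrix, the observed edge distributions over the output alphabet $\{\star, 0, 1\}$ are
\begin{align*}
p_n(\star) &= 1-p, & p_n(0) &= p(1-q_1), & p_n(1) &= p q_1, \\
q_n(\star) &= 1-p, & q_n(0) &= p(1-q_2), & q_n(1) &= p q_2.
\end{align*}
Identifying the erasure symbol $\star$ with the ``no-edge'' symbol of Section~\ref{SecDiscrete} and the two channel outputs $0$ and $1$ with colors $\ell = 1$ and $\ell = 2$, this is exactly the model of Section~\ref{SecDiscrete} with $a_1 = \frac{pn(1-q_1)}{\log n}$, $a_2 = \frac{pnq_1}{\log n}$, $b_1 = \frac{pn(1-q_2)}{\log n}$, $b_2 = \frac{pnq_2}{\log n}$; as a sanity check, $u = a_1 + a_2 = v = b_1 + b_2 = \frac{pn}{\log n}$, so the empty-edge probabilities agree with the prescribed form.

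Next I would substitute these values into the threshold functional appearing in Theorems~\ref{ThmAchievable} and~\ref{ThmImpossible}:
\begin{equation*}
\sum_{\ell=1}^{2} \left(\sqrt{a_\ell} - \sqrt{b_\ell}\right)^2 = \frac{pn}{\log n}\left[\left(\sqrt{1-q_1} - \sqrt{1-q_2}\right)^2 + \left(\sqrt{q_1} - \sqrt{q_2}\right)^2\right],
\end{equation*}
which is precisely the quantity in the statement. The first assertion then follows from Theorem~\ref{ThmAchievable}, since the hypothesis $\liminf > 1$ makes the left-hand side exceed $1$ for all large $n$, and the second from Theorem~\ref{ThmImpossible}, applied with $K = 2$; for the latter one must check that condition~\eqref{EqnAbsCont} holds, which is guaranteed as soon as $p > 0$ and $q_1, q_2 \in (0,1)$, and that the likelihood ratios $\frac{1-q_1}{1-q_2}$ and $\frac{q_1}{q_2}$ (together with $1$ for $\star$) are bounded above and below, so that the boundedness hypothesis used in the proof of Theorem~\ref{ThmImpossible} is met.

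The one place that requires care — and what I expect to be the main obstacle — is that Theorems~\ref{ThmAchievable} and~\ref{ThmImpossible} are stated for \emph{fixed} vectors $\mathbf a, \mathbf b$, whereas here $a_\ell$ and $b_\ell$ carry an $n$-dependence through $p = p_n$; they are literally constant only when $p_n$ is proportional to $\frac{\log n}{n}$ with a fixed constant. To handle the general case I would, for the achievability direction, bypass Theorem~\ref{ThmAchievable} and instead use the Renyi-divergence characterization directly: the expansion $I = \big(\sum_\ell (\sqrt{a_\ell}-\sqrt{b_\ell})^2\big)\frac{\log n}{n} + O\big(\frac{\log^2 n}{n^2}\big)$, which only needs $\mathbf a, \mathbf b$ to stay bounded, gives $\frac{nI}{\log n} = \frac{pn}{\log n}[\cdots] + o(1)$, so the hypothesis yields $\liminf \frac{nI}{\log n} > 1$ and Corollary~\ref{CorRenyi} (equivalently Theorem~\ref{ThmRenyiK}) applies. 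For the impossibility direction I would verify that the argument proving Theorem~\ref{ThmImpossible} goes through verbatim with $n$-dependent parameters provided they remain bounded and bounded away from degeneracy in the relevant sense; here that is ensured by $q_1, q_2 \in (0,1)$ fixed and by $\limsup \frac{pn}{\log n}[\cdots] < 1$, which in particular keeps $\frac{pn}{\log n}$ bounded, so that both the likelihood ratio (controlled above and below) and $\frac{nI}{\log n}$ (whose limsup is below $1$ by the same computation as above) behave as the proof requires.
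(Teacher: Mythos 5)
Your proposal is correct and takes essentially the same approach as the paper, which gives no separate proof of Corollary~\ref{CorCensored} and simply asserts that the censored block model is the discrete model of Section~\ref{SecDiscrete} with $K=2$, $L=2$ under exactly your identification $a_1 = pn(1-q_1)/\log n$, $a_2 = pnq_1/\log n$, $b_1 = pn(1-q_2)/\log n$, $b_2 = pnq_2/\log n$. Your extra care about the $n$-dependence of $p$ — routing achievability through the Renyi expansion and Corollary~\ref{CorRenyi}, and checking that the impossibility argument and condition~\eqref{EqnAbsCont} tolerate bounded $n$-dependent parameters with $q_1,q_2 \in (0,1)$ — fills in a point the paper only gestures at with its remark about relaxing the scaling $p \asymp \log n/n$.
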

Sharp thresholds were derived for the censored block model by Abbe et al.~\cite{AbbEtal14} and Hajek et al.~\cite{HajEtal15} when $K = 2$ and $q_1 = 1 - q_2 = \epsilon$, in the cases where $\epsilon = \frac{1}{2}$ and $\epsilon \in [0,1]$, respectively. It is easy to check that their thresholds agree with ours. On the other hand, Corollary~\ref{CorCensored} does not require the graphical channel to flip edge labels with equal probability, and we may slightly relax the scaling requirement $p \asymp \frac{\log p}{n}$ in the statement of our corollary. Furthermore, the theorems in Section~\ref{SecDiscrete} clearly hold for more general graphical channels aside from the channel giving rise to the censored block model; we may have more than two labels for each node, corresponding to a larger codebook, and the output alphabet of the channel may be arbitrarily large. Translated into the language of graphical channels, our results from Section~\ref{SecDiscrete} show the following:
\begin{corollary}
\label{CorGraphicalChannel}
Consider a graphical channel, where node inputs are binary and edges are encoded using an XOR operation. The edges are passed through a discrete memoryless channel that maps each edge to a discrete label $\ell \in \{1, \dots, L\}$, with probability $\frac{a_\ell \log n}{n}$ for edges encoded with 0 and probability $\frac{b_\ell \log n}{n}$ for edges encoded with 1, and erases edges with probabilities $1 - \frac{\sum_{\ell = 1}^L a_\ell \log n}{n}$ and $1 - \frac{\sum_{\ell = 1}^L b_\ell \log n}{n}$, respectively. Let $I$ denote the Renyi entropy between the two output distributions. If $\liminf_{n \rightarrow \infty} \frac{nI}{\log n} > 1$, the maximum likelihood decoder succeeds with probability tending to 1. If $\limsup_{n \rightarrow \infty} \frac{nI}{\log n} < 1$, the maximum likelihood decoder fails with probability bounded away from 0.
\end{corollary}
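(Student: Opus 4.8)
The plan is to derive Corollary~\ref{CorGraphicalChannel} as an essentially immediate consequence of Theorems~\ref{ThmAchievable} and~\ref{ThmImpossible} together with the asymptotic expansion of the Renyi divergence recorded just after Theorem~\ref{ThmAchievable}. First I would observe that the graphical channel described here produces exactly the discrete weighted stochastic block model of Section~\ref{SecDiscrete}: an edge whose two endpoints share a community label has its XOR input equal to $0$, so its output color $\ell$ occurs with probability $\frac{a_\ell \log n}{n}$ and it is erased with probability $1 - \frac{u\log n}{n}$ where $u = \sum_\ell a_\ell$; an edge between communities has input $1$, output color $\ell$ with probability $\frac{b_\ell \log n}{n}$, and erasure probability $1 - \frac{v\log n}{n}$ with $v = \sum_\ell b_\ell$. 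Identifying the erasure symbol $\star$ with the label $0$ of Section~\ref{SecDiscrete}, the two output distributions are precisely $p_n$ and $q_n$, so ``maximum likelihood decoding of the channel output'' coincides with the maximum likelihood community estimator analyzed in those theorems.

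Next I would translate the hypotheses. By the displayed expansion
\[
I = \left( \sum_{\ell=1}^L \left(\sqrt{a_\ell} - \sqrt{b_\ell}\right)^2 \right) \frac{\log n}{n} + O\!\left(\frac{\log^2 n}{n^2}\right),
\]
we have $\frac{nI}{\log n} \to \sum_{\ell=1}^L (\sqrt{a_\ell} - \sqrt{b_\ell})^2$ (and, if the $a_\ell,b_\ell$ are themselves allowed to vary with $n$, the same expansion gives $\liminf \frac{nI}{\log n} = \liminf \sum_\ell(\sqrt{a_\ell}-\sqrt{b_\ell})^2$ and likewise for $\limsup$). Thus the assumption $\liminf_{n\to\infty} \frac{nI}{\log n} > 1$ is equivalent to $\sum_\ell (\sqrt{a_\ell} - \sqrt{b_\ell})^2 > 1$ in the fixed-parameter case, and Theorem~\ref{ThmAchievable} then yields that the estimator succeeds with probability tending to $1$. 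Symmetrically, $\limsup_{n\to\infty} \frac{nI}{\log n} < 1$ is equivalent to $\sum_\ell (\sqrt{a_\ell} - \sqrt{b_\ell})^2 < 1$, so Theorem~\ref{ThmImpossible} gives failure with probability bounded away from $0$ (at least $\frac13$ for large $n$), using the binary-input case $K = 2$. I would note that the condition~\eqref{EqnAbsCont} needed for Theorem~\ref{ThmImpossible} translates into requiring all $a_\ell, b_\ell > 0$, i.e. every output label has positive probability under both inputs; this should be stated, or absorbed into the standing assumptions of the corollary.

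The only genuine subtlety — and the step I expect to require the most care — is that the corollary is phrased directly in terms of $I$ rather than in terms of $\sum_\ell(\sqrt{a_\ell}-\sqrt{b_\ell})^2$, so one must be sure the equivalence between the two formulations is exact in whatever generality the $a_\ell, b_\ell$ are permitted to depend on $n$. If they are held fixed this is immediate from the expansion above; if they vary with $n$ one needs the error term $O(\log^2 n / n^2)$ to be uniformly controlled, which in turn requires a uniform bound on the $a_\ell$ and $b_\ell$ (equivalently, on the per-color degrees), matching the boundedness-of-likelihood-ratio caveat already flagged in the preceding remark. I would therefore phrase the proof as: (i) identify the model; (ii) invoke the Renyi expansion to pass between $I$ and the Hellinger-type sum; (iii) apply Theorem~\ref{ThmAchievable} for the achievability half and Theorem~\ref{ThmImpossible} (with $K=2$) for the impossibility half; and (iv) remark that the argument extends to $L = \infty$ under the additional uniform likelihood-ratio bound, exactly as in the earlier remark.
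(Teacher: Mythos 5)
Your proposal is correct and matches the paper's (implicit) justification: the corollary is stated as a direct translation of Theorems~\ref{ThmAchievable} and~\ref{ThmImpossible} into the graphical-channel language, using the expansion $I = \bigl( \sum_{\ell}(\sqrt{a_\ell}-\sqrt{b_\ell})^2 \bigr)\frac{\log n}{n} + O(\log^2 n/n^2)$ to pass between the condition on $nI/\log n$ and the Hellinger-type sum. Your observation that the impossibility half implicitly requires the positivity condition~\eqref{EqnAbsCont} is accurate and is a caveat the paper itself flags only in the remark following Theorem~\ref{ThmImpossible}.
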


As noted by Abbe and Sandon~\cite{AbbSan15} in a slightly different setting, the threshold for reliable communication in a graphical channel is governed by a different quantity from the mutual information between the input distribution and the output of the channel, which arises from the analysis of channel capacity in traditional channel coding theory. This is because the encoding of the graphical channel is already built into the stochastic block model framework, rather than being optimized by the user. It is interesting to observe that Renyi divergence and Hellinger distance are the information-theoretic quantities that determine the ``capacity" of graphical channels in the case of equal-sized communities.

\subsection{Thresholds for submatrix localization}\label{section: submatrix}

The stochastic block model framework described in this paper also has natural connections to the submatrix localization problem, in which our more general framework involving arbitrary (discrete or continuous) distributions is useful in deriving thresholds for exact recovery. The goal in submatrix localization is to partition the rows and columns of a random matrix $A \in \mathbb R^{n_L \times n_R}$ into disjoint subsets $\{C_1, \dots, C_K\}$ and $\{D_1, \dots, D_K\}$, where $n_L = \sum_{k=1}^K C_k$ and $n_R = \sum_{k=1}^K D_k$. For each $1 \le k \le K$, the entries $(i,j) \in C_k \times D_k$ are drawn i.i.d.\ from a distribution $G$ with mean $\mu_n > 0$, and all other entries in $A$ are drawn from the recentered distribution $G - \mu_n$.

Chen and Xu~\cite{CheXu14} derive impossibility and achievability results for submatrix localization when $|C_k| = K_L$ and $|D_k| = K_R$; i.e., the row and column subsets have equal size. Furthermore, the distribution $G$ is assumed to be sub-Gaussian with parameter 1. Chen and Xu~\cite{CheXu14} show that the maximum likelihood estimator succeeds with probability tending to 1 when
\begin{equation}
\label{EqnLocalizeUB}
\mu_n^2 \ge \frac{c_1 \log n}{\min\{K_L, K_R\}}.
\end{equation}
Furthermore, if $G \sim \cN(\mu_n, 1)$, the probability that maximum likelihood fails is bounded away from 0 when
\begin{equation}
\label{EqnLocalizeLB}
\mu_n^2 \le \frac{1}{12} \max\left\{\frac{\log(n_R - K_R)}{K_L}, \; \frac{\log(n_L - K_L)}{K_R}\right\}.
\end{equation}
Specializing to the case when $K_R = K_L = n$, inequalities~\eqref{EqnLocalizeUB} and~\eqref{EqnLocalizeLB} imply the existence of a threshold at $\mu^2 = \Theta\left(\frac{\log n}{n}\right)$, although the value of the constant has not been determined precisely. \\

When $K_R = K_L = n$, the results in Section~\ref{SecAchievable} may be applied to obtain sufficient conditions under which the maximum likelihood estimator succeeds for the submatrix localization problem with probability converging to 1. We have the following result, which follows directly from Corollary~\ref{CorRenyi} and the computation $I = \frac{\mu_n^2}{2}$ in the case when $G \sim \cN(\mu_n, 1)$:
\begin{corollary}
\label{CorLocalize}
Suppose $K_R = K_L = n$, and let $I$ denote the the Renyi divergence of order $\frac{1}{2}$ between the distributions $G$ and $G-\mu_n$. Suppose
\begin{equation}
\label{EqnGeneralCond}
\liminf_{n \rightarrow \infty} \frac{nI}{\log n} > 1.
\end{equation}
Then the maximum likelihood estimator succeeds with probability converging to 1. In particular, when $G \sim \cN(\mu_n, 1)$, maximum likelihood succeeds if
\begin{equation}
\label{EqnGaussianCond}
\liminf_{n \rightarrow \infty} \frac{n \mu_n^2}{\log n} > 4.
\end{equation}
\end{corollary}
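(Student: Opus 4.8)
The plan is to derive Corollary~\ref{CorLocalize} as an essentially immediate consequence of Corollary~\ref{CorRenyi}, once we reconcile the submatrix localization setup with the weighted stochastic block model framework of Section~\ref{SecBackground}. First I would observe that in the case $K_R = K_L = n$ with $K = 2$, the matrix $A \in \real^{2n \times 2n}$ is exactly the weight matrix of a homogeneous weighted block model on a bipartite-type structure: each entry $A_{ij}$ plays the role of an edge weight $W_{(i,j)}$, with $A_{ij} \sim G$ when $(i,j)$ lies in one of the planted blocks $C_k \times D_k$ (the ``within-community'' case) and $A_{ij} \sim G - \mu_n$ otherwise (the ``between-community'' case). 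Thus $p_n = G$ and $q_n = G - \mu_n$, and recovering the partitions $\{C_1, C_2\}$, $\{D_1, D_2\}$ is precisely the exact-recovery problem for this block model. (A remark that the bipartite row/column bookkeeping only changes the combinatorial prefactors in Theorem~\ref{ThmRenyiML} by constants, not the exponential rate, suffices to justify invoking Corollary~\ref{CorRenyi} verbatim.)

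Given this identification, the first assertion is just Corollary~\ref{CorRenyi} applied with these $p_n, q_n$: condition~\eqref{EqnGeneralCond} is literally the hypothesis $\liminf_{n\to\infty} nI/\log n > 1$ of that corollary, so the maximum likelihood estimator succeeds with probability converging to $1$.

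The only genuine computation is the Gaussian specialization. Here I would compute the Renyi divergence of order $\tfrac12$ between $G \sim \cN(\mu_n, 1)$ and $G - \mu_n \sim \cN(0,1)$ directly from the definition in Theorem~\ref{ThmRenyiML}: writing $\phi_\mu$ for the $\cN(\mu,1)$ density,
\begin{equation*}
\int_{-\infty}^\infty \sqrt{\phi_{\mu_n}(x)\,\phi_0(x)}\,dx = \int_{-\infty}^\infty \frac{1}{\sqrt{2\pi}}\exp\!\left(-\frac{(x-\mu_n)^2 + x^2}{4}\right)dx,
\end{equation*}
and completing the square in the exponent gives $-\tfrac12(x - \mu_n/2)^2 - \mu_n^2/8$, so the integral equals $e^{-\mu_n^2/8}$. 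Hence $I = -2\log e^{-\mu_n^2/8} = \mu_n^2/4$. Wait --- I should double-check the normalization of $I$ against the paper's convention: with $I = -2\log(\int \sqrt{p_n q_n})$ as stated, we get $I = -2 \cdot (-\mu_n^2/8) = \mu_n^2/4$. Then $nI/\log n = n\mu_n^2/(4\log n)$, and $\liminf nI/\log n > 1$ is equivalent to $\liminf n\mu_n^2/\log n > 4$, which is exactly~\eqref{EqnGaussianCond}. (I note in passing that the excerpt's inline remark ``$I = \mu_n^2/2$'' uses the un-doubled Hellinger-affinity convention; under the Theorem~\ref{ThmRenyiML} normalization the constant is $\mu_n^2/4$, consistent with the ``$>4$'' in~\eqref{EqnGaussianCond}.)

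The main obstacle --- really the only point requiring care --- is the reduction step: verifying that the submatrix localization problem with $K_R = K_L = n$ truly falls under the scope of Corollary~\ref{CorRenyi}, since that corollary was stated for a block model on $2n$ vertices with symmetric edge set, whereas localization has $n$ row-indices and $n$ column-indices and an asymmetric (row $\times$ column) index set. I would handle this by noting that the proof of Theorem~\ref{ThmRenyiML} only uses a union bound over subsets of mislabeled vertices together with tail bounds on sums of i.i.d.\ log-likelihood-ratio increments along the ``cut''; replacing the $\binom{2n}{k}$-type counting and the $k(n-k)$-type edge counts by their bipartite analogues $\binom{n}{k}^2$ and $\Theta(k(n-k))$ changes only polynomial prefactors, so the conclusion ``$\liminf nI/\log n > 1 \Rightarrow$ success w.p.\ $\to 1$'' is unaffected. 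Everything else is bookkeeping.
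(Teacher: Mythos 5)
Your proposal is correct and follows essentially the same route as the paper: identify $p_n = G$, $q_n = G - \mu_n$, invoke Corollary~\ref{CorRenyi}, and compute the order-$\frac{1}{2}$ Renyi divergence for the Gaussian case. Two remarks. First, your value $I = \mu_n^2/4$ is the right one under the normalization $I = -2\log\bigl(\int \sqrt{p_n q_n}\bigr)$ of Theorem~\ref{ThmRenyiML}, and it is the value consistent with the threshold $4$ in~\eqref{EqnGaussianCond}; the paper's inline assertion ``$I = \frac{\mu_n^2}{2}$'' is inconsistent with its own definition and its own stated constant, so you were right to trust your computation over the quoted figure. Second, the bipartite bookkeeping you flag (row and column index sets, possibly different numbers $k_1, k_2$ of misclassified rows and columns, with $\binom{n}{k_1}^2\binom{n}{k_2}^2$ in the union bound and $\Theta\bigl((k_1+k_2)n\bigr)$ flipped entries in the Chernoff exponent) is genuinely needed and is glossed over by the paper, which simply asserts that the results of Section~\ref{SecAchievable} ``may be applied''; your sketch of why only the polynomial prefactors change is the right justification and is more careful than the original on this point.
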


In particular, note that the condition~\eqref{EqnGaussianCond} matches inequality~\eqref{EqnLocalizeUB}, with a value for the specific constant. Furthermore, the sufficient condition~\eqref{EqnGeneralCond} in Corollary~\ref{CorLocalize} may be of independent interest in obtaining thresholds for a general version of the submatrix localization problem, where the remaining entries in the martrix are drawn from a distribution $G'$ rather than a shifted version of $G$. For instance, if $G \sim \cN(\mu_n, \sigma_n^2)$ and $G' \sim \cN(\mu_n', \sigma_n'^2)$, the sufficient condition for exact recovery in Corollary~\ref{CorLocalize} becomes
\begin{equation*}
\liminf_{n \rightarrow \infty} \left\{\frac{(\mu_n - \mu_n')^2}{4\bar{\sigma}_n^2} + \log \left(\frac{\sigma_n'}{\sigma_n}\right) - 2 \log\left(\frac{\sigma_n'}{\bar{\sigma}_n}\right)\right\}  \frac{\log n}{n}> 1,
\end{equation*}
where $\bar{\sigma}_n^2 := \frac{\sigma_n^2 + \sigma_n'^2}{2}$. Although we do not yet have techniques for deriving impossibility results in the general submatrix localization setting, we conjecture that the upper bounds of Corollary~\ref{CorLocalize} based on the Renyi divergence may be tight here, as well.

%%%%%

\section{Proofs of theorems}
\label{SecProofs}

In this section, we outline the proofs of the main theorems. Detailed proofs of the more technical lemmas are contained in the appendix.

\subsection{Proof of Theorem~\ref{ThmRenyiML}}
\label{SecThmRenyiML}

We first show that the result holds when $p_n$ and $q_n$ are absolutely continuous with respect to each other. We provide a proof for the case when $p_n$ and $q_n$ are continuous distributions; the result for discrete distributions follows by replacing the integrals with summation signs. When $p_n$ and $q_n$ are not absolutely continuous with respect to each other, we establish the theorem for the two cases (continuous and discrete distributions) separately. \\

Define the function
\begin{equation*}
d_n(x) = \log \left(\frac{p_n(x)}{q_n(x)}\right).
\end{equation*}
We have the following lemma:

\begin{lemma}\label{lemma: aww}
Let the sets of vertices constituting the two communities be denoted by $A$ and $B$. If the maximum likelihood estimator does not coincide with the truth, then there exist $1 \leq k \leq \frac{n}{2}$ and sets $A_w \subset A$ and $B_w \subset  B$ such that $|A_w| = |B_w| = k$, and
\begin{equation}
\cS(A_w, \bar A_w) + \cS(B_w, \bar B_w) \leq \cS(A_w, \bar B_w) + \cS(\bar A_w, B_w).
\end{equation}
Here, $\bar A_w = A \setminus A_w$, $\bar B_w = B \setminus B_w$, and for disjoint sets of vertices $\hat A$ and $\hat B$,
$$\cS(\hat A, \hat B) := \sum_{i \in \hat A, j \in \hat B} d_n(w_{ij}).$$
\end{lemma}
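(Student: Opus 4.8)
The plan is to unwind the definition of the maximum likelihood estimator and reduce the event that it fails to a purely deterministic, combinatorial statement about the weight function $d_n$; no probabilistic input enters at this stage. First I would recall the reduction already described just before the lemma: working with balanced partitions (the two true communities each have size $n$), for a partition of the $2n$ vertices into blocks $C$ and $D$ of size $n$ the log-likelihood equals
\[
\sum_{\{i,j\}\subseteq C}\log p_n(w_{ij}) + \sum_{\{i,j\}\subseteq D}\log p_n(w_{ij}) + \sum_{i\in C,\,j\in D}\log q_n(w_{ij}).
\]
Since $\sum\log q_n(w_{ij})$ over all pairs does not depend on the partition, maximizing the likelihood over balanced partitions is equivalent to maximizing $\sum_{\{i,j\}\subseteq C} d_n(w_{ij}) + \sum_{\{i,j\}\subseteq D} d_n(w_{ij})$, i.e.\ to minimizing the cut value $\cS(C,D)$. (We are in the regime where $p_n$ and $q_n$ are mutually absolutely continuous, so $d_n$ is finite a.e., as assumed in this part of the proof of Theorem~\ref{ThmRenyiML}.)

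Next, assuming the ML partition does not coincide with the truth $(A,B)$, I would fix the correspondence between the true and estimated communities that misclassifies the fewest vertices — this is legitimate because the two estimated blocks may be labeled in either order — and write the ML partition as $(A',B')$ under that correspondence. Setting $A_w := A\cap B'$, $B_w := B\cap A'$, $\bar A_w := A\cap A'$, and $\bar B_w := B\cap B'$, a one-line count using $|A|=|A'|=|B|=|B'|=n$ gives $|A_w|=|B_w|=:k$; optimality of the chosen correspondence forces $k\le n/2$; and $k\ge 1$, for otherwise $A'=A$, $B'=B$ and the ML partition would equal the truth.

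The final step is to write the optimality of $(A',B')$ in the form
\[
\sum_{\{i,j\}\subseteq A'} d_n(w_{ij}) + \sum_{\{i,j\}\subseteq B'} d_n(w_{ij}) \;\ge\; \sum_{\{i,j\}\subseteq A} d_n(w_{ij}) + \sum_{\{i,j\}\subseteq B} d_n(w_{ij}),
\]
and expand each side over the decomposition $A=\bar A_w\sqcup A_w$, $B=\bar B_w\sqcup B_w$, $A'=\bar A_w\sqcup B_w$, $B'=\bar B_w\sqcup A_w$. The contributions from pairs lying entirely within one of $\bar A_w$, $A_w$, $\bar B_w$, $B_w$ appear identically on both sides and cancel; what remains, after using the symmetry $\cS(\hat A,\hat B)=\cS(\hat B,\hat A)$, is precisely the claimed inequality $\cS(A_w,\bar A_w) + \cS(B_w,\bar B_w) \le \cS(A_w,\bar B_w) + \cS(\bar A_w,B_w)$.

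I do not expect a genuine obstacle here: the lemma is a structural consequence of optimality. The only steps needing care are the combinatorial bookkeeping that produces $|A_w|=|B_w|$ and $1\le k\le n/2$ (which relies on both partitions being balanced) and carrying out the four-block expansion consistently. The substantive work — union-bounding $\mprob(F)$ over the choices of $(A_w,B_w)$ with $|A_w|=|B_w|=k$ and controlling the resulting probabilities through the Renyi divergence of order $\frac{1}{2}$ — is what comes afterward, with this lemma serving as the entry point.
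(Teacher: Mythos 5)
Your proposal is correct and follows essentially the same route as the paper: identify the misclassified sets $A_w=A\cap B'$, $B_w=B\cap A'$ of the (balanced) ML partition, note $|A_w|=|B_w|=k\le n/2$ after choosing the better of the two labelings, and compare the likelihood (equivalently, the sum of $d_n$ over within-community pairs) of the ML assignment to that of the truth, with all pairs whose within/between status is unchanged cancelling. The paper states this more tersely as an inequality between products of $p_n$- and $q_n$-likelihoods over the four changed edge classes and takes logarithms, which is exactly your four-block expansion.
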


\begin{proof}

Consider an assignment that is more likely than the maximum likelihood estimate. For this assignment, let $A_w$ and $B_w$ be the sets of misclassified nodes. Without loss of generality, we will assume that $k = |A_w| = |B_w| \leq n/2$. For disjoint sets of vertices $\hat A$ and $\hat B$, define $$p_n(\hat A, \hat B) = \prod_{i \in \hat A, j\in \hat B} p_n(w_{ij}),$$
and define $q_n(\hat A, \hat B)$ analogously. Since the new assignment is more likely that the truth, we must have
\begin{align*}
p_n(A_w, \bar A_w) \; p_n(B_w, \bar B_w) \; q_n(A_w, \bar B_w) \; q_n(\bar A_w, B_w) \leq q_n(A_w, \bar A_w) \; q_n(B_w, \bar B_w) \; p_n(A_w, \bar B_w) \; p_n(\bar A_w, B_w).
\end{align*}
Taking logarithms, this immediately implies that
\begin{align*}
\cS(A_w, \bar A_w) + \cS(B_w, \bar B_w) \leq \cS(A_w, \bar B_w) + \cS(\bar A_w, B_w),
\end{align*}
completing the proof.
\end{proof}

Let $F$ be the event that the maximum likelihood estimate does not coincide with the truth. For fixed sets $A_w$ and $B_w$ of size $k$, denote
\begin{equation*}
P_n^{(k)} = \mathbb P\left(\cS(A_w, \bar A_w) + \cS(B_w, \bar B_w) \leq \cS(A_w, \bar B_w) + \cS(\bar A_w, B_w) \right).
\end{equation*}
By Lemma~\ref{lemma: aww} and a union bound, we have
\begin{align}
\label{EqnPF}
\mathbb P(F) \leq \sum_{k=1}^{n/2} {n \choose k}^2 P_n^{(k)}.
\end{align}

Let $\{X_i\}_{i \geq 1}$ be a sequence of i.i.d.\ random variables distributed according to $p_n$, and let $\{Y_i\}_{i \geq 1}$ be a sequence of i.i.d.\ random variables distributed according to $q_n$. For natural number $N > 0$, define the expression
\begin{align}
\label{EqnT}
T(N, p_n, q_n, \epsilon) = \mathbb P \left(\sum_{i=1}^N \Big(d_n(Y_i)-d_n(X_i)\Big) \ge \epsilon \right).
\end{align}
Then
\begin{equation}
P_n^{(k)} = \mathbb P\left( \sum_{i=1}^{2k(n-k)} d_n(Y_i) - \sum_{i=1}^{2k(n-k)} d_n(X_i) \geq 0 \right) = T(2k(n-k), p_n, q_n, 0).
\end{equation}
Let $Z_i = d_n(Y_i) - d_n(X_i)$. The moment generating function of $Z_i$ is then given by
\begin{align*}
M(t) &= \mathbb E\left[e^{td_n(Y_i)}\right] \mathbb E\left[e^{-td_n(X_i)}\right] = \left(\int_{-\infty}^\infty \left(\frac{p_n(x)}{q_n(x)} \right)^t q_n(x) dx\right) \left(\int_{-\infty}^\infty \left(\frac{p_n(x)}{q_n(x)} \right)^{-t} p_n(x) dx\right).
\end{align*}
Let $t^\star$ be the the point where $M(t)$ is minimized for $t > 0$. We evaluate $t^\star$ by differentiating $M(t)$ and setting it to $0$, as follows:
\begin{align*}
M'(t) &=  \left(\int_{-\infty}^\infty \left(\frac{p_n(x)}{q_n(x)} \right)^t \log \frac{p_n(x)}{q_n(x)} q_n(x) dx\right) \left(\int_{-\infty}^\infty \left(\frac{p_n(x)}{q_n(x)} \right)^{-t} p_n(x) dx\right) \\
& \qquad + \left(\int_{-\infty}^\infty \left(\frac{p_n(x)}{q_n(x)} \right)^t q_n(x) dx\right) \left(\int_{-\infty}^\infty \left(\frac{p_n(x)}{q_n(x)} \right)^{-t} \log \frac{q_n(x)}{p_n(x)}p_n(x) dx\right).
\end{align*}
Note that if we substitute $t = 1/2$ in the above expression, we obtain
\begin{align*}
M'(1/2) &= \left(\int_{-\infty}^\infty \sqrt{p_n(x)q_n(x)}\log \frac{p_n(x)}{q_n(x)} dx\right) \left(\int_{-\infty}^\infty \sqrt{p_n(x)q_n(x)} dx\right) \\
& \qquad + \left(\int_{-\infty}^\infty \sqrt{p_n(x)q_n(x)} dx\right) \left(\int_{-\infty}^\infty \sqrt{p_n(x)q_n(x)} \log \frac{q_n(x)}{p_n(x)}dx\right)\\
&= 0.
\end{align*}
Since $M(t)$ is a convex function, we conclude that $t^\star = 1/2$. Substituting, we then obtain
\begin{equation*}
M(t^\star) = \left(\int_{-\infty}^\infty \sqrt{p_n(x)q_n(x)} dx\right)^2.
\end{equation*}
In particular,
\begin{equation*}
I = -\log M(t^\star) = -2\log \left(\int_{-\infty}^\infty \sqrt{p_n(x)q_n(x)} dx\right)
\end{equation*}
is the Renyi divergence defined in the statement of the theorem.

By a Chernoff bound on the sum $\sum_{i = 1}^{2k(n-k)} Z_i$, we have
\begin{align*}
P_n^{(k)} \leq \left(\inf_{t > 0} M(t) \right)^{2k(n-k)} = \left(\int_{-\infty}^\infty \sqrt{p_n(x)q_n(x)} dx\right)^{4k(n-k)} = \exp(-2k(n-k) I).
\end{align*}
Using ${n \choose k} \leq \left( \frac{ne}{k} \right)^k$ and substituting into inequality~\eqref{EqnPF}, we arrive at the bound
\begin{align}
\mathbb P(F) &\leq \sum_{k=1}^{n/2} \left( \frac{ne}{k} \right)^{2k}\exp(-2k(n-k) I) = \sum_{k=1}^{n/2} \exp \left( 2k \left( \log \frac{n}{k} + 1\right) -2k(n-k)I\right),
\end{align}
which is exactly inequality~\eqref{eq: mallu1}. As noted earlier, the proof for absolutely continuous discrete distributions follows exactly the same steps as above, and we will not repeat them here. \\

We now turn to the case where $p_n$ and $q_n$ are not necessarily absolutely continuous with respect to each other. 

\textbf{Case 1:} $p_n$ and $q_n$ are continuous distributions. Our strategy is to deliberately create a noisy version of the edges by adding a small Gaussian random variable to the existing edge weights, and then apply the maximum likelihood estimator to the new noisy graph. Naturally, this new estimator is worse than directly using a maximum likelihood estimator for the original distributions; however, the benefit of adding noise is that it makes the new distributions absolutely continuous with respect to each other. For some $\nu > 0$, we write $\hat p_n = p_n \star \cN(0,\nu^2)$ and $\hat q_n = q_n \star \cN(0, \nu^2)$, where $\star$ represents convolution. Let the Renyi divergence between $\hat p$ and $\hat q$ be denoted by $I_\nu$. Using the argument for absolutely continuous distributions, we conclude that
\begin{equation}
\mathbb P(F) \leq \sum_{k=1}^{n/2} \exp \left( 2k \left( \log \frac{n}{k} + 1\right) -2k(n-k)I_\nu\right).
\end{equation}
We claim that $\lim_{\nu \to 0} I_\nu = I$, which implies the desired result. From van Erven and Harremo\"{e}s~\cite{van2014renyi}, the Renyi divergence is uniformly continuous in $(P, Q)$, with respect to the total variation topology. Hence, it suffices to show that 
\begin{equation}
\lim_{\nu \to 0} ||\hat p_n - p_n||_1 = 0, \quad \text{and} \quad \lim_{\nu \to 0} ||\hat q_n - q_n||_1 = 0.
\end{equation}
The proof of the above fact is standard and may be found in Theorem 6.20 of Knapp~\cite{knapp2005basic} or the lecture notes~\cite{banuelos}.

\textbf{Case 2:} $p_n$ and $q_n$ are discrete distributions. Similar to the case of continuous distributions, we deliberately create a noisy graph and use the maximum likelihood estimator on this new graph. We fix an $\epsilon > 0$ and assume, without loss of generality, that $p_n(0), q_n(0) > 0$. We first replace every edge with weight 0 in the original graph by an edge with weight $i$, with probability $\frac{\epsilon}{2^i}$, for all $i \geq 1$. Thus, the new edge weight distributions are given by $\hat p_n$ and $\hat q_n$ where
\begin{align*}
\hat p_n(0) & = p_n(0)(1 - \epsilon), \text{ and } p_n(\ell) = p_n(\ell) + \frac{p_n(0)\epsilon}{2^\ell}, \text{ for } \ell \geq 1, \quad \text{and} \\
\hat q_n(0) & = q_n(0)(1 - \epsilon), \text{ and } q_n(\ell) = q_n(\ell) + \frac{q_n(0)\epsilon}{2^\ell}, \text{ for } \ell \geq 1.
\end{align*}
Since $\hat p_n$ and $\hat q_n$ are absolutely continuous with respect to each other, we have
\begin{equation}
\mathbb P(F) \leq \sum_{k=1}^{n/2} \exp \left( 2k \left( \log \frac{n}{k} + 1\right) -2k(n-k)I_\epsilon \right).
\end{equation}
where $I_\epsilon$ is the Renyi divergence between $\hat p_n$ and $\hat q_n$. It is easy to see that as $\epsilon \to 0$, we have $||\hat p_n - p_n||_1 \to 0$ and $||\hat q_n - q_n||_1 \to 0$. Again using the continuity of the Renyi divergence from van Erven and Harremo\"{e}s~\cite{van2014renyi}, we conclude that $$ \lim_{\epsilon \to 0} I_\epsilon = I,$$
which concludes the proof.

%%%%%%%

\subsection{Proof of Corollary~\ref{CorRenyi}}
\label{SecCorRenyi}

Note that for sufficiently large $n$, we have $I \ge (1+ \epsilon)\frac{\log n}{n}$, for some $\epsilon > 0$. Substituting into the bound~\eqref{eq: mallu1} of Theorem~\ref{ThmRenyiML}, we therefore have
\begin{align*}
\mathbb P(F) &\leq  \sum_{ k=1}^{n/2} \exp \left( 2k \left( \log \frac{n}{k} + 1\right) -2k(n-k)(1+ \epsilon)\frac{\log n}{n}\right)\\
&= \sum_{k=1}^{n/2} \exp\big(2k \left(\log n - \log k + 1 - (1-k/n)(1+ \epsilon)\log n \right) \big)\\
&= \sum_{k=1}^{n/2} \exp\big(2k\left(-\log k + 1 - (\epsilon - k/n - k\epsilon/n) \log n\right) \big)\\
&= \sum_{k=1}^{n/2} n^{-2k\epsilon} \exp \left( -2k\left(\log k - 1 - \frac{k\log n}{n} - \frac{k\epsilon\log n}{n} \right) \right).
\end{align*}
We break up the summation into two parts:
\begin{align}
\label{EqnScruffy}
& \sum_{k=1}^{n/2} n^{-2k\epsilon} \exp \left( -2k\left(\log k - 1 - \frac{k\log n}{n} - \frac{k\epsilon\log n}{n} \right) \right) \notag \\
&\qquad \qquad \qquad \qquad \qquad = \sum_{k=1}^2 n^{-2k\epsilon} \exp \left( -2k\left(\log k - 1 - \frac{k\log n}{n} - \frac{k\epsilon\log n}{n} \right) \right) \nonumber\\
& \qquad \qquad \qquad \qquad \qquad \qquad + \sum_{k = 3}^{n/2} n^{-2k\epsilon} \exp \left( -2k\left(\log k - 1 - \frac{k\log n}{n} - \frac{k\epsilon\log n}{n} \right) \right).
\end{align}
For $3 \le k \le \frac{n}{2}$, we have
\begin{equation}
\label{EqnBurfi}
\log k - \frac{k\log n}{n} - \frac{k \epsilon \log n}{n} = \log k - \frac{k(1+\epsilon)\log n}{n}  \geq \frac{\log k}{3}.
\end{equation}
This is the because the function $\frac{\log x}{x}$ is decreasing for $x \geq 3$, so we only need to verify that
\begin{equation*}
\frac{2}{3k}\log k \geq \frac{(1+\epsilon)\log n}{n}
\end{equation*}
holds at $k = n/2$. This is equivalent to checking that
\begin{equation*}
\frac{4}{3n}\log\left(\frac{n}{2}\right) = \frac{4}{3n}\log n - \frac{4\log 2}{3n} \ge \frac{(1+\epsilon)\log n}{n},
\end{equation*}
which indeed holds for sufficiently large $n$. Substituting the bound~\eqref{EqnBurfi} into inequality~\eqref{EqnScruffy}, we then obtain
\begin{multline}
\label{EqnLadoo}
\sum_{k=1}^{n/2} n^{-2k\epsilon} \exp \left( -2k\left(\log k - 1 - \frac{k\log n}{n} - \frac{k\epsilon\log n}{n} \right) \right) \\
\leq \sum_{k=1}^2 n^{-2k\epsilon} \exp \left( -2k\left(\log k - 1 - \frac{k\log n}{n} - \frac{k\epsilon\log n}{n} \right) \right) + \sum_{k = 3}^{n/2} n^{-2k\epsilon} \exp \left( -2k\left(- 1 + \frac{\log k}{3} \right) \right).
\end{multline}
The first term in inequality~\eqref{EqnLadoo} may be bounded as follows:
\begin{align*}
& \sum_{k=1}^2 n^{-2k\epsilon} \exp \left( -2k\left(\log k - 1 - \frac{k\log n}{n} - \frac{k\epsilon\log n}{n} \right) \right) \\
& \qquad \qquad = n^{-2\epsilon}\exp\left( 2 + \frac{2(1+\epsilon)\log n}{n}\right) + n^{-4\epsilon}\exp \left( -4\left(\log 2 - 1 - \frac{2\log n}{n} - \frac{2\epsilon\log n}{n} \right) \right) \\
& \qquad \qquad < Cn^{-2\epsilon},
\end{align*}
for a suitable constant $C$. For the second term in inequality~\eqref{EqnLadoo}, note that 
\begin{align*}
& \sum_{k = 3}^{n/2} n^{-2k\epsilon} \exp \left( -2k\left(- 1 + \frac{\log k}{3} \right) \right) \le n^{-6\epsilon} \sum_{k = 3}^{n/2} \exp\left( \frac{2k \log k}{3} \left(-1 + \frac{3}{\log k}\right)\right)\\
& \qquad \qquad \leq n^{-6\epsilon} \sum_{k=3}^{e^6} \exp\left( \frac{2k \log k}{3} \left(-1 + \frac{3}{\log k}\right)\right) + n^{-6\epsilon} \sum_{k = e^6 + 1}^\infty \exp\left( \frac{2k \log k}{3} \left(-1 + \frac{3}{\log k}\right)\right)\\
& \qquad \qquad \leq C_1 n^{-6\epsilon} + n^{-6\epsilon}\sum_{k = e^6 + 1}^\infty \exp \left(- \frac{k\log k}{3} \right)\\
& \qquad \qquad = O(n^{-6\epsilon}).
\end{align*}
Thus, we conclude that
\begin{align}
\mathbb P(F) \leq C_2 n^{-2\epsilon},
\end{align}
 for a suitable constant $C_2$, implying that $\mathbb P(F) \rightarrow 0$ as $n \rightarrow \infty$. This concludes the proof.

%%%%

\subsection{Proof of Theorem~\ref{ThmRenyiK}}
\label{SecThmRenyiK}

We will follow the arguments used in the proof of Theorem 3.2 of Zhang and Zhou~\cite{zhangminimax}.

We label the nodes $\{1, 2, \dots, nK\}$. Without loss of generality, suppose community $i$ comprises the nodes $\{(i-1)n+1, \dots, in\}$, and denote the corresponding assignment mapping nodes to communities by $\sigma_0$. Let $A_{nK\times nK}$ be the adjacency matrix for the graph, where $A_{i,j} \in \{0, \dots, L\}$ is the color of edge $(i,j)$. Just as in the $K=2$ case, the maximum likelihood estimator for $K >2$ communities seeks the partition that minimizes the weight of cross-community edges (equivalently, maximizes the weight of within-community edges), where the weight of an edge with color \mbox{$\ell \in \{0, 1, \dots, L\}$} is given by
\begin{equation*}
w_\ell = \log \left\{\frac{p_n(\ell)}{q_n(\ell)}\right\}.
\end{equation*}
In other words, the maximum likelihood estimator $\hat \sigma$ satisfies
\begin{equation*}
\hat \sigma = \argmax_\sigma \sum_{i,j} w_\ell \cdot \mathbf 1\{ A_{i,j} = \ell \} \mathbf 1\{ \sigma(i) = \sigma(j) \} := \argmax_\sigma T(\sigma).
\end{equation*}
Note that the value of $T(\sigma)$ remains the same for permutations of $\sigma$. To be precise, let $\Delta$ be the set of all permutations from $\{1, \dots, K\}$ to $\{1, \dots, K\}$. For an assignment $\sigma$, denote 
$$\Gamma(\sigma) = \{\sigma' : \exists \delta \in \Delta \text{ s.t. } \sigma' = \delta \circ \sigma\}.$$
We may check that for all $\sigma' \in \Gamma(\sigma)$, we have $T(\sigma') = T(\sigma)$.  Thus, the maximum likelihood estimator finds the best equivalence class $\Gamma$ such that any $\sigma \in \Gamma$ achieves the maximum value of $T$.

From the equivalence class $\Gamma$, we pick a permutation $\sigma$ that is closest to $\sigma_0$ in terms of the Hamming distance. Let us denote this assignment by $\sigma(\Gamma)$; i.e.
$$\sigma(\Gamma) \in \text{argmin}_{\sigma \in \Gamma} d_H(\sigma, \sigma_0),$$
where $d_H(\sigma, \sigma_0)$ denotes the Hamming distance between $\sigma$ and $\sigma_0$. We now define
$$P_m := \mathbb P \left\{\exists\Gamma : d_H(\sigma(\Gamma), \sigma_0) = m \text{ and } T(\sigma(\Gamma))  \geq T(\sigma_0)\right\}.$$ 
Let $\mathbb P(F)$ be the probability that the maximum likelihood estimator fails. Clearly,
\begin{align}
\label{EqnSum}
\mathbb P(F) \leq \sum_{m=1}^{nK} P_m.
\end{align}
Furthermore, we have the inequality 
\begin{align*}
P_m \leq \left| \left\{\Gamma : d_H(\sigma(\Gamma), \sigma_0) = m \right\}\right| \cdot \max_{\{\sigma : d_H(\sigma, \sigma_0) = m\}} \mathbb P(T(\sigma) \geq T(\sigma_0)).
\end{align*}
We will bound each of the terms in the above product separately. For the first term, we use the following lemma:

\begin{lemma}[Proposition 5.2 of Zhang and Zhou~\cite{zhangminimax}]\label{lemma: gamma}
The cardinality of the equivalence classes $\Gamma$ such that $d_H(\sigma(\Gamma), \sigma_0) = m$ is bounded as follows:
$$\left| \left\{\Gamma : d_H(\sigma(\Gamma), \sigma_0) = m \right\}\right| \leq \min \left\{ \left( \frac{enK^2}{m}\right)^m, \; K^{nK}\right\}.$$
\end{lemma}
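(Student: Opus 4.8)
The plan is to count, in two different ways, the number of equivalence classes $\Gamma$ whose canonical representative $\sigma(\Gamma)$ is at Hamming distance exactly $m$ from the reference assignment $\sigma_0$. The first bound $K^{nK}$ is immediate: every equivalence class is determined by (indeed contains) some assignment $\sigma : \{1, \dots, nK\} \to \{1, \dots, K\}$, and there are at most $K^{nK}$ such assignments in total, so certainly at most $K^{nK}$ equivalence classes, regardless of the distance constraint. The content is therefore in the bound $\left(\frac{enK^2}{m}\right)^m$, which is the sharp estimate when $m$ is small relative to $nK$.

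For the second bound, I would first observe that an equivalence class $\Gamma$ with $d_H(\sigma(\Gamma), \sigma_0) = m$ is completely specified by the set $S$ of $m$ nodes on which $\sigma(\Gamma)$ differs from $\sigma_0$, together with the values $\sigma(\Gamma)$ takes on those $m$ nodes. (On the complement of $S$, $\sigma(\Gamma)$ agrees with $\sigma_0$, so no further information is needed.) The number of ways to choose $S$ is $\binom{nK}{m}$, and for each node in $S$ there are at most $K$ possible new labels, giving the crude count
\begin{equation*}
\left| \left\{\Gamma : d_H(\sigma(\Gamma), \sigma_0) = m \right\}\right| \leq \binom{nK}{m} K^m.
\end{equation*}
Now I apply the standard estimate $\binom{N}{m} \leq \left(\frac{eN}{m}\right)^m$ with $N = nK$, which yields $\binom{nK}{m} K^m \leq \left(\frac{enK}{m}\right)^m K^m = \left(\frac{enK^2}{m}\right)^m$. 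Taking the minimum of the two bounds gives the claim. (Strictly speaking one should be slightly careful that distinct choices of $(S, \text{labels})$ can collapse to the same equivalence class, but this only helps: the count above is an overcount of the equivalence classes, so the inequality is preserved.)

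The only genuinely delicate point — and the one I would expect to need the most care — is verifying that the map from an equivalence class to the pair (support of the canonical representative relative to $\sigma_0$, its values there) is well-defined and that $\sigma(\Gamma)$, chosen as a Hamming-closest element of $\Gamma$, really does satisfy $d_H(\sigma(\Gamma), \sigma_0) = m$ with the stated parametrization; this is where one must invoke the definition $\sigma(\Gamma) \in \operatorname{argmin}_{\sigma \in \Gamma} d_H(\sigma, \sigma_0)$ carefully. Everything else is elementary counting. Since this is exactly Proposition~5.2 of Zhang and Zhou~\cite{zhangminimax}, I would present the argument compactly and refer the reader there for the routine details.
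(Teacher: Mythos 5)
Your argument is correct and is essentially the same counting argument as in Proposition 5.2 of Zhang and Zhou, to which the paper simply defers: the $K^{nK}$ bound is trivial, and the injection $\Gamma \mapsto \sigma(\Gamma) \mapsto (S, \text{labels on } S)$ together with $\binom{nK}{m}K^m \le \left(\frac{enK^2}{m}\right)^m$ gives the other bound. The injectivity worry you flag is harmless, since the orbits $\Gamma$ partition the set of assignments and hence have distinct representatives.
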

Suppose there exists an assignment $\sigma$ such that $d_H(\sigma, \sigma_0) = m$ and $T(\sigma) \geq T(\sigma_0)$. This is equivalent to
\begin{align*}
\sum_{i,j} w_\ell \cdot \mathbf 1\{ A_{i,j} = \ell \} \mathbf 1\{ \sigma(i) = \sigma(j) \} \ge \sum_{i,j} w_\ell \cdot \mathbf 1\{ A_{i,j} = \ell \} \mathbf 1\{ \sigma_0(i) = \sigma_0(j) \},
\end{align*}
or
\begin{align*}
\sum_{i,j:  \sigma(i) = \sigma(j), \sigma_0(i) \neq \sigma_0(j)} w_\ell \cdot \mathbf 1\{ A_{i,j} = \ell \}  \ge \sum_{i,j: \sigma(i) \neq \sigma(j), \sigma_0(i) = \sigma_0(j)} w_\ell \cdot \mathbf 1\{ A_{i,j} = \ell \}. 
\end{align*}
Denoting
$$\gamma = \Big|\{(i,j): \sigma(i) = \sigma(j), \sigma_0(i) \neq \sigma_0(j)\}\Big|, \quad \text{and} \quad \alpha =  \Big|\{(i,j): \sigma(i) \neq \sigma(j), \sigma_0(i) = \sigma_0(j)\}\Big|,$$
we then have the bound
\begin{align*}
\mathbb P(T(\sigma) \geq T(\sigma_0)) &= \mathbb P \left(\sum_{i=1}^\gamma d_n(Y_i) - \sum_{i=1}^{\alpha} d_n(X_i) \geq 0 \right) \notag \\
&\leq \inf_{t > 0} \left(\mathbb E e^{td_n(Y_1)}\right)^\gamma \left(\mathbb E e^{td_n(X_1)}\right)^\alpha \notag \\
&= \inf_{t > 0} \left( \sum_{\ell = 0}^L\left(\frac{p_n(\ell)}{q_n(\ell)}\right)^tq_n(\ell)\right) \cdot \left(\sum_{\ell = 0}^L\left(\frac{p_n(\ell)}{q_n(\ell)}\right)^{-t}p_n(\ell)\right),
\end{align*}
where as before, $d_n(\ell) = \log \left\{\frac{p_n(\ell)}{q_n(\ell)}\right\}$, and $X_i \sim p_n$ and $Y_i \sim q_n$, and we have used a Chernoff bound in the above inequality. Taking $t = 1/2$, we then arrive at
\begin{align}\label{eq: Tsigma1}
\mathbb P(T(\sigma) \geq T(\sigma_0)) &\leq e^{-(\gamma+\alpha) I} \leq e^{-\min(\gamma, \alpha)I},
\end{align}
where $I$ denotes the Renyi divergence of order $\frac{1}{2}$ between $p_n$ and $q_n$. 
We then use the following lemma:
\begin{lemma}[Lemma 5.3 of Zhang and Zhou~\cite{zhangminimax}]\label{lemma: bound alpha gamma}
For $0 < m < nK$, the minimum of $\alpha$ and $\gamma$ is bounded from below as follows:
\begin{equation*}
\min(\alpha, \gamma) \geq
\begin{cases}
nm-m^2, &\text{ if } m \leq \frac{n}{2}\\
\frac{2nm}{9}, &\text{ if } m > \frac{n}{2}.
\end{cases}
\end{equation*}
\end{lemma}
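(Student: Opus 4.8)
# Proof Proposal for Lemma~\ref{lemma: bound alpha gamma}

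\textbf{Setup and strategy.} The plan is to work directly with the combinatorial structure imposed by an assignment $\sigma$ with $d_H(\sigma, \sigma_0) = m$, where $\sigma$ is the representative of its equivalence class closest to $\sigma_0$. Recall that $\gamma$ counts ordered pairs $(i,j)$ that $\sigma$ places in the same community but $\sigma_0$ separates, and $\alpha$ counts pairs that $\sigma$ separates but $\sigma_0$ unites. Since we only need a lower bound on $\min(\alpha, \gamma)$, I would bound $\alpha$ and $\gamma$ separately from below and then take the worse of the two estimates. The natural bookkeeping device is the $K \times K$ confusion matrix $N = (N_{ab})$, where $N_{ab} = |\{i : \sigma_0(i) = a,\ \sigma(i) = b\}|$; each row and each column sums to $n$ (column sums because $\sigma$ is still a balanced assignment — this is where the balanced-community hypothesis enters), and $m = \sum_i \mathbf{1}\{\sigma(i) \neq \sigma_0(i)\}$ can be read off from the off-diagonal mass of $N$. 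The key point that makes $\sigma$ the \emph{closest} representative is that for each label $a$, the diagonal entry $N_{aa}$ is a maximum entry of row $a$ (otherwise relabeling via a permutation $\delta$ would decrease the Hamming distance), so in particular $N_{aa} \geq n/2$ is \emph{not} automatic, but $N_{aa} \geq N_{ab}$ for all $b$. Since row sums are $n$, this forces $N_{aa} \geq n/K$, and more usefully, the total diagonal mass is $nK - m$, so $m \leq (K-1) \cdot \max_a(n - N_{aa})$.

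\textbf{Lower-bounding $\gamma$.} For $\gamma$, I would count pairs $(i,j)$ with $\sigma(i) = \sigma(j) = b$ but $\sigma_0(i) = a \neq a' = \sigma_0(j)$: summing over $b$, this is $\sum_b \sum_{a \neq a'} N_{ab} N_{a'b} = \sum_b \left[ (\sum_a N_{ab})^2 - \sum_a N_{ab}^2 \right] = \sum_b \left[ n^2 - \sum_a N_{ab}^2 \right]$. Using the fact that in column $b$, the diagonal entry $N_{bb}$ is the largest, one can lower-bound $n^2 - \sum_a N_{ab}^2$ in terms of $n - N_{bb}$; summing and using $\sum_b (n - N_{bb}) = m$ with a convexity/extremal argument over how $m$ distributes across columns gives $\gamma \geq nm - m^2$ in the regime $m \leq n/2$, and the weaker linear bound $\gamma \geq \frac{2nm}{9}$ when $m > n/2$ (the constant $2/9$ being what survives the worst-case distribution of misclassified nodes). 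The computation of $\alpha$ is entirely symmetric with the roles of $\sigma$ and $\sigma_0$ swapped — $\alpha = \sum_b \sum_{a \neq a'} N_{ba} N_{ba'}$ after transposing — so the same bound holds, and hence $\min(\alpha, \gamma)$ satisfies the claimed inequality. Since this lemma is quoted verbatim as Lemma 5.3 of Zhang and Zhou~\cite{zhangminimax}, the cleanest route in the paper is simply to cite it; I would include the sketch above only for self-containedness.

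\textbf{Main obstacle.} The delicate part is the case analysis at $m \approx n/2$ and the extraction of the sharp constant $2/9$ in the large-$m$ regime: one must identify the worst-case confusion matrix (intuitively, the misclassified nodes clustering so as to minimize the number of newly-created or newly-destroyed same-community pairs), and verify that even there the counts do not degrade below $\frac{2nm}{9}$. The quadratic bound $nm - m^2$ for small $m$ is the easy, almost-tight regime (it is essentially $\binom{\text{moved}}{1}\cdot\binom{\text{not moved in a different original community}}{1}$ counting), but stitching it to the linear bound requires checking that $nm - m^2 \geq \frac{2nm}{9}$ fails exactly past the threshold where the linear bound takes over, i.e. that the two bounds are consistent at the crossover. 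Because the statement is borrowed wholesale, I would not reproduce the optimization; the forward-looking task is just to confirm that the Renyi-divergence Chernoff bound \eqref{eq: Tsigma1} combined with this lemma and Lemma~\ref{lemma: gamma}, substituted into \eqref{EqnSum}, yields \eqref{EqnDucky} after splitting the sum at $\lfloor n/2 \rfloor$.
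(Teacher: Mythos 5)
The paper does not prove this lemma at all: it is imported verbatim as Lemma 5.3 of Zhang and Zhou and used as a black box, so your bottom-line decision to cite it is exactly what the paper does, and your sketch of the counting identity via the confusion matrix $N_{ab}$ (with the concavity argument giving $\gamma \geq m(n-m)$ when the off-diagonal column masses $m_b = n - N_{bb}$ sum to $m \leq n/2$) is the right skeleton. One correction for the record: the closest-representative property does \emph{not} imply that $N_{aa}$ is the maximum of row $a$, nor that $N_{aa} \geq n/K$; minimality of $d_H(\sigma,\sigma_0)$ over the equivalence class only says $\sum_b N_{bb} \geq \sum_b N_{\delta(b),b}$ for every permutation $\delta$, whose simplest consequence is the transposition inequality $N_{aa} + N_{bb} \geq N_{ab} + N_{ba}$. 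This weaker property is what must be exploited in the $m > n/2$ regime to extract the constant $\tfrac{2}{9}$ (your easy regime $m \leq n/2$ does not need it at all), so if you ever flesh out that case you should start from the permutation inequality rather than row-maximality.
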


Substituting the bound from Lemma \ref{lemma: bound alpha gamma} into inequality \eqref{eq: Tsigma1}, we obtain the upper bound
\begin{equation}
\label{EqnTbound}
\mprob\left(T(\sigma) \geq T(\sigma_0)\right) \leq  
\begin{cases}
e^{(-nm+m^2)I}, &\text{ if } m \leq \frac{n}{2}\\
e^{-\frac{2mn}{9}I}, &\text{ if } m > \frac{n}{2}.
\end{cases}
\end{equation}
Finally, substituting the results of Lemma~\ref{lemma: gamma} and inequality~\eqref{EqnTbound} into inequality~\eqref{EqnSum}, we arrive at the bound~\eqref{EqnDucky}.

Note that we have
\begin{align*}
\exp{(-nm+m^2)I} < \exp\left(\frac{-2mn}{9}I\right),
\end{align*}
for $m < \frac{7n}{9}$. In particular, the bound~\eqref{EqnDucky} may be relaxed to obtain a bound of the form
\begin{equation}
\label{EqnGorilla}
\mprob(F) \le \sum_{m=1}^{m'} \min \left\{ \left( \frac{enK^2}{m}\right)^m, \; K^{nK}\right\} e^{(-nm + m^2)I} + \sum_{m= m'+1}^{nK} \min \left\{ \left( \frac{enK^2}{m}\right)^m, \; K^{nK}\right\} e^{-\frac{2mn}{9} I},
\end{equation}
for any $m' \leq \lfloor \frac{n}{2} \rfloor$. \\

We now verify the sufficiency of inequality~\eqref{EqnPiggy}. Suppose that for some $\epsilon > 0$ and for all sufficiently large $n$, we have
$$\frac{nI}{\log n} > 1+ \epsilon.$$ 
In particular, for $m' = \lfloor \frac{\epsilon n}{2} \rfloor$ and $m \in \{1, \dots, m'\}$, we have the bound
$$(n-m')I \ge n\left(1-\frac{\epsilon}{2}\right)I \geq n\left(1-\frac{\epsilon}{2}\right)(1+\epsilon)\frac{\log n}{n} > \left(1 + \frac{\epsilon}{3}\right)\log n,$$
for small enough $\epsilon$ and large enough $n$, implying that
\begin{align*}
P_m \leq \left( \frac{enK^2}{m} e^{-(n-m)I}\right)^m \leq\left( enK^2 e^{-(n-m')I}\right)^{m} \leq\left( eK^2 n^{-\epsilon/3}\right)^{m}.
\end{align*}
Thus,
\begin{align*}
\sum_{m = 1}^{m'} P_m &\leq \sum_{m=1}^{m'}\left( eK^2 n^{-\epsilon/3}\right)^{m} \\
& \leq \sum_{m=1}^{\infty}\left( eK^2 n^{-\epsilon/3}\right)^{m} \\
& \leq \left(eK^2 n^{-\epsilon/3}\right)\sum_{m=0}^{\infty}\left( eK^2 n^{-\epsilon/3}\right)^{m} \\
& \leq C_1n^{-\epsilon/3},
\end{align*}
where the last inequality follows because the geometric series converges for large enough $n$. 

For $m \in \{m'+1, \dots, nK\}$, we have the bound
\begin{align*}
P_m \leq \left( \frac{enK^2}{m} e^{-\frac{2nI}{9}}\right)^m \leq  \left( \frac{enK^2}{m'+1} e^{-\frac{2nI}{9}}\right)^m \le \left( \frac{2e}{\epsilon}K^2 e^{-\frac{2nI}{9}}\right)^m.
\end{align*}
Note that for large enough $n$, we also have
\begin{align*}
\frac{2nI}{9} > \frac{2n}{9}\frac{(1+\epsilon)\log n}{n} > \frac{2\log n}{9}.
\end{align*}
Hence,
\begin{equation*}
P_m \leq \left(\frac{2e}{\epsilon}K^2 n^{-\frac{2}{9}}\right)^m,
\end{equation*}
implying the bound
\begin{align*}
\sum_{m = m'+1}^{nK} P_m \leq \sum_{m = 1}^{\infty} \left(\frac{2e}{\epsilon}K^2 n^{-\frac{2}{9}}\right)^m \leq \left(\frac{2e}{\epsilon}K^2n^{-\frac{2}{9}}\right) \sum_{m = 0}^{\infty} \left(\frac{2e}{\epsilon} K^2 n^{-\frac{2}{9}}\right)^m \leq C_2n^{-\frac{2}{9}}.
\end{align*}
Therefore, using the decomposition~\eqref{EqnGorilla}, the total probability of failure is bounded by
\begin{align*}
\mathbb P(F) \leq C_1n^{-\epsilon/3} + C_2n^{-2/9},
\end{align*}
which converges to 0 as $n\to \infty$. This shows that the maximum likelihood estimator succeeds with probability tending to 1 as $n \rightarrow \infty$, as wanted.

%%%%

\subsection{Proof of Theorem~\ref{ThmAchievable}}
\label{SecThmAchievable}

Note that in this setting, we have
\begin{align}
I &= -2 \log \left( \left(\sqrt{\left(1 - \frac{u\log n}{n}\right)\left(1 - \frac{v\log n}{n}\right)}\right) + \sum_{\ell=1}^L\frac{\sqrt{a_\ell b_\ell}\log n}{n}\right) \notag \\
&= -2 \log \left( \left(1 - \frac{u\log n}{2n} + O\left(\frac{\log^2 n}{n^2}\right)\right)\left(1 - \frac{v\log n}{2n} + O\left(\frac{\log^2 n}{n^2}\right)\right) + \sum_{\ell=1}^L  \frac{\sqrt{a_\ell b_\ell}\log n}{n}\right) \notag \\
&= -2 \log \left( 1 - \frac{u\log n}{2n} - \frac{v\log n}{2n} +  \sum_{\ell=1}^L  \frac{\sqrt{a_\ell b_\ell}\log n}{n} + O\left(\frac{\log^2 n}{n^2}\right)\right) \notag \\
&= -2  \left(- \frac{u\log n}{2n} - \frac{v\log n}{2n} +  \sum_{\ell=1}^L  \frac{\sqrt{a_\ell b_\ell}\log n}{n} + O\left(\frac{\log^2 n}{n^2}\right)\right) \notag \\
&= \frac{C \log n}{n} + O\left(\frac{\log^2 n}{n^2}\right), \label{eq: approx I}
\end{align}
where $C = \sum_{\ell=1}^L (\sqrt a_\ell - \sqrt b_\ell)^2$. 
In particular,
$$I = \left( \sum_{\ell=1}^L \left(\sqrt a_\ell - \sqrt b_\ell\right)^2 \right) \frac{\log n}{n} + O\left(\frac{\log^2 n}{n^2}\right).$$
Corollary~\ref{CorRenyi} (for $K = 2$ communities) and Theorem~\ref{ThmRenyiK} (for more than two communities) then imply the desired result.

%%%%%%

\subsection{Proof of Theorem~\ref{ThmImpossible}}
\label{SecThmImpossible}

We will follow the proof strategy of Abbe et al.~\cite{abbe2014exact}. We will show that if 
$$\sum_{\ell=1}^L \left(\sqrt a_\ell - \sqrt b_\ell\right)^2 < 1,$$
there with a probability of at least $1/3$, we can find nodes $i \in A$ and $j \in B$ such that exchanging their community assignments has a larger likelihood than the ground truth. This would establish that the maximum likelihood estimator fails with probability at least $1/3$. Although we will establish the proof for the case of two communities, we note that the proof below trivially extends to $K > 2$ communities each of size $n$, simply by taking $A$ and $B$ to be any two fixed communities from the $K$ communities.  \\

Let $A = \{1, 2, \dots, n\}$ and $B = \{n+1, n+2, \dots, 2n\}$. For $i \neq j$, let $w_{ij} \in \{0, 1, \dots, L\}$ be the weight of the edge $(i,j)$. Just as in the case of unlabeled edges, maximizing the likelihood in the labeled case may be interpreted as finding the min-cut for the stochastic block model, where the weight of an edge with color $\ell \in \{0, \dots, L\}$ is $\log \left(\frac{p_n(\ell)}{q_n(\ell)}\right)$. For ease of notation, define the function $$d_n(\ell) = \log \left( \frac{p_n(\ell)}{q_n(\ell)} \right).$$  We may describe $d_n$ explicitly as
\begin{equation}
\label{EqnDefnDn}
d_n(0) = \log \left\{ \frac{1 - u\log n/n}{1 - v\log n/n}\right\}, \qquad d_n(\ell) = \log \left( \frac{a_\ell}{b_\ell} \right)\text{ for } 1 \leq \ell \leq K.
\end{equation}
Note that since $d_n(0) \to 0$ as $n \to \infty$, we may find a constant $\cM > 0$ that upper-bounds $d_n$ for all $n$. Thus, 
\begin{equation*}
\cM \geq \max_\ell d_n(\ell), \qquad \text{for all } n \text{ and all } 0 \leq \ell \leq L.
\end{equation*}
For any node $i$ and any subset of nodes $H$, denote 
$$\cS(i , H) = \sum_{j \in H, j \neq i} d_n(w_{ij}).$$
Using an argument along the lines of Lemma \ref{lemma: aww}, it is easy to check that if there exist nodes $i \in A$ and $j \in B$ such that
\begin{align}\label{eq: ml fail}
\cS(i, A\setminus \{i\}) + \cS(j, B\setminus \{j\}) < \cS(i, B\setminus \{j\}) + \cS(j, A\setminus \{i\}),
\end{align}
then the community assignment where $\sigma(i) = B$ and $\sigma(j) = A$ and every other assignment remains the same is more likely than the truth. Thus, the maximum likelihood estimator will fail if this happens. Define the following events:
\begin{align*}
F &= \text{ maximum likelihood fails},\\
F_A &= \exists i \in A ~:~ \cS(i, A \setminus \{i\}) < \cS(i, B) - \cM,\\
F_B &= \exists j \in B ~:~ \cS(j, B \setminus \{j\}) < \cS(i, A) - \cM.
\end{align*}
We have the following simple lemma:
\begin{lemma}\label{lemma: FA}
If $\mathbb P(F_A) \ge 2/3$, then $\mathbb P(F) \ge 1/3$.
\end{lemma}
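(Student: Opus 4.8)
\textbf{Proof plan for Lemma~\ref{lemma: FA}.}

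The plan is to exhibit, on the event $F_A$, a pair $i \in A$, $j \in B$ for which the swap inequality~\eqref{eq: ml fail} holds, with positive conditional probability coming from a symmetric event involving only the nodes in $B$. First I would fix, on the event $F_A$, a node $i \in A$ witnessing $\cS(i, A \setminus \{i\}) < \cS(i, B) - \cM$. The role of the $-\cM$ slack is precisely to absorb the effect of moving one vertex $j$ from $B$ to $A$: for \emph{any} $j \in B$, we have $\cS(i, B \setminus \{j\}) \ge \cS(i,B) - \cM$ since a single edge weight $d_n(w_{ij})$ is at most $\cM$, and likewise $\cS(i, A \setminus \{i\}) \le \cS(i, (A\setminus\{i\}) \cup \{j\})$... — more carefully, what we need is that for a suitable $j$, the combined inequality $\cS(i, A\setminus\{i\}) + \cS(j, B \setminus \{j\}) < \cS(i, B\setminus\{j\}) + \cS(j, A \setminus\{i\})$ holds. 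Rearranging, the $i$-witness gives us a surplus of more than $\cM$ on the $i$-terms (after accounting for the single edge $(i,j)$ that switches sides), so it suffices to find $j \in B$ whose own terms do not destroy this surplus, e.g. $\cS(j, B \setminus \{j\}) - \cS(j, A \setminus \{i\})$ is not too large, which will hold with probability bounded below by a constant because $j$'s edges to $A$ and to $B\setminus\{j\}$ are i.i.d.\ draws and, by symmetry of the roles of $A$ and $B$ under relabeling, the event $\cS(j, B\setminus\{j\}) \le \cS(j, A\setminus\{i\})$ has probability at least $\tfrac12 - o(1)$ for each individual $j$.

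The key point driving the constant $1/3$ is an independence/symmetry argument: conditioned on the choice of the witness $i$ and on all edges incident to $A$ (which determine whether $i$ is a valid witness), the edges from each $j \in B$ to $A \setminus \{i\}$ versus to $B \setminus \{j\}$ remain exchangeable in the relevant sense, so that with probability bounded away from $0$ at least one such $j$ cooperates. Concretely, I would define the ``bad for $j$'' event $G_j = \{\cS(j, B \setminus \{j\}) > \cS(j, A \setminus \{i\})\}$ and argue $\mprob(G_j) \le \tfrac12 + o(1)$, hence by taking two disjoint candidate vertices $j_1, j_2 \in B$ and using conditional independence, the chance that both fail to cooperate is at most $(\tfrac12 + o(1))^2 < \tfrac34$ for large $n$; combined with $\mprob(F_A) \ge 2/3$ this yields $\mprob(F) \ge \tfrac23 \cdot \tfrac14 $... which is too small. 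Instead the cleaner route, which I expect the authors take, is: condition on $F_A$ and the witness $i$; then the event that \emph{some} $j \in B$ satisfies~\eqref{eq: ml fail} is implied by the event $F_B' := \{\exists j \in B : \cS(j, B \setminus \{j\}) < \cS(j, A) - \cM\}$ stated with $A$ in place of $A \setminus \{i\}$ up to a single-edge correction absorbed by another $\cM$-slack, and $F_B'$ is, by the $A \leftrightarrow B$ symmetry of the model, essentially the mirror image of $F_A$. One then shows $F_A$ and $F_B'$ are independent (they depend on disjoint sets of edges: $F_A$ on within-$A$ and $A$--$B$ edges is \emph{not} disjoint — so the real argument conditions on the $A$--$B$ edges and uses that within-$A$ and within-$B$ edges are independent given those). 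Granting $\mprob(F_A) \ge 2/3$ and an analogous lower bound forcing $\mprob(F_B' \mid A\text{--}B \text{ edges}) $ to be large on a large-probability event, one concludes $\mprob(F) \ge \mprob(F_A \cap F_B') \ge 1/3$ by inclusion or by a union-bound complement: $\mprob(F_A^c \cup F_B'^c) \le \tfrac13 + \tfrac13$ would give the wrong direction, so the correct deduction is $\mprob(F_A \cap F_B') \ge \mprob(F_A) + \mprob(F_B') - 1$; if both are $\ge 2/3$ this is $\ge 1/3$.

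The main obstacle is the \emph{dependence} between $F_A$ and the swap-success event: both involve the $A$--$B$ bipartite edges, so one cannot simply multiply probabilities. The resolution is to condition on the sigma-algebra generated by all $A$--$B$ edge weights; given this conditioning, the within-$A$ edges and within-$B$ edges are independent, so the witness node $i$ (whose existence depends on within-$A$ and $A$--$B$ edges) and the cooperating node $j$ (whose relevant comparison, once $i$ is fixed, reduces to within-$B$ versus fixed $A$--$B$ quantities) decouple appropriately. Handling the $\cM$-slack bookkeeping — making sure the single edge $(i,j)$ that crosses from the $B$-side to the $A$-side in the swap is correctly charged, which is exactly why the definitions of $F_A$ and $F_B$ carry the $-\cM$ — is the routine but essential technical point. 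I expect the formal proof to be short: fix $i$ from $F_A$; observe $\cS(i,B\setminus\{j\}) \ge \cS(i,B) - \cM > \cS(i, A\setminus\{i\})$ for every $j$; then note~\eqref{eq: ml fail} is implied once additionally $\cS(j, B \setminus \{j\}) \le \cS(j, A \setminus \{i\})$, and show this last event has probability at least, say, $1/2$ conditionally, with $2/3 \cdot 1/2 \ge 1/3$ after a careful accounting — the precise constant arithmetic being the last thing to pin down.
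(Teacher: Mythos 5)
Your proposal does contain the paper's actual argument, but it is buried under several detours that the proof does not need and that would lead you astray if pursued. The paper's proof is three lines: (i) by the $A\leftrightarrow B$ symmetry of the homogeneous model, $\mathbb P(F_B)=\mathbb P(F_A)\ge 2/3$; (ii) by the Bonferroni bound $\mathbb P(F_A\cap F_B)\ge \mathbb P(F_A)+\mathbb P(F_B)-1\ge 1/3$ --- which holds under \emph{arbitrary} dependence, so all of your worrying about conditioning on the $A$--$B$ edges, exchangeability, and decoupling within-$A$ from within-$B$ edges is unnecessary; (iii) on $F_A\cap F_B$, take the two witnesses $i\in A$ and $j\in B$ and observe that the $\cM$-slack absorbs the single crossing edge, $\cS(i,B)-\cM\le \cS(i,B)-\cS(i,j)=\cS(i,B\setminus\{j\})$ and likewise for $j$, so~\eqref{eq: ml fail} holds and maximum likelihood fails. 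You correctly identify both the Bonferroni step and the role of $\cM$ midway through, so the kernel is right; but your closing sentence reverts to a product bound ``$2/3\cdot 1/2$,'' which would require an independence you have not established and is not how the constant $1/3$ arises. Similarly, the per-$j$ symmetry claim $\mathbb P\bigl(\cS(j,B\setminus\{j\})\le \cS(j,A\setminus\{i\})\bigr)\ge \tfrac12-o(1)$ is a red herring: the cooperating $j$ is supplied wholesale by the event $F_B$, whose probability bound comes for free from relabeling the communities, not from a fresh probabilistic estimate. Strip the proposal down to symmetry plus Bonferroni plus the $\cM$ bookkeeping and you have exactly the paper's proof.
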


\begin{proof}
By symmetry, we have $\mathbb P(F_B) \ge 2/3$, so by a union bound, $\mathbb P(F_A\cap F_B) \ge 1/3$. Thus, with probability at least $1/3$, there exist nodes $i\in A$ and $j \in B$ such that
\begin{align*}
\cS(i, A\setminus \{i\}) & < \cS(i, B) - \cM \leq \cS(i, B) - \cS(i,j) = \cS(i, B\setminus \{j\}), \qquad \text{and} \\
\cS(j, B\setminus \{j\}) & < \cS(j, A) - \cM \leq \cS(j, A) - \cS(i,j) = \cS(j, A\setminus \{j\}).
\end{align*}
This implies
\begin{equation*}
\cS(i, A\setminus \{i\})+\cS(j, B\setminus \{j\}) < \cS(i, B\setminus \{j\})+\cS(j, A\setminus \{j\}),
\end{equation*}
which from expression \eqref{eq: ml fail}, implies that the maximum likelihood estimator fails.
\end{proof}

We now define $\gamma(n)$ and $\delta(n)$ as follows:
\begin{align*}
\gamma(n) = (\log n)^{\log^{\frac{2}{3}} n}, \qquad \text{and} \qquad \delta(n) = \frac{\sqrt{\log n}}{\log \log n}.
\end{align*}
 Let $H$ be a fixed subset of $A$ of size $\frac{n}{\gamma(n)}$. We will take $\gamma(n) \asymp (\log n)^{\log^{\frac{2}{3}} n}$, such that $\frac{n}{\gamma(n)}$ is an integer. Define the event $\Delta$ as follows:
\begin{align*}
\Delta = \text{ for all nodes $i \in H$}, \quad  \cS(i, H) < \delta(n).
\end{align*}
We then have the following lemma:
\begin{lemma}[Proof in Appendix~\ref{AppLemDelta}] \label{lemma: delta}
$\mathbb P(\Delta) \geq \frac{9}{10}$.
\end{lemma}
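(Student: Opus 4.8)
\textbf{Proof proposal for Lemma~\ref{lemma: delta}.}

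The plan is to apply a union bound over the $\frac{n}{\gamma(n)}$ nodes $i \in H$, so it suffices to show that $\mathbb P\big(\cS(i,H) \ge \delta(n)\big) = o\big(\gamma(n)/n\big)$ for each fixed $i$. Fix such an $i$. Then $\cS(i,H) = \sum_{j \in H, j \ne i} d_n(w_{ij})$ is a sum of $\frac{n}{\gamma(n)} - 1$ i.i.d.\ copies of $d_n(X)$ with $X \sim p_n$ (since $i$ and all nodes of $H$ lie in $A$). The first step is to compute the moment generating function $\mathbb E\big[e^{t d_n(X)}\big]$ under $p_n$. Using the explicit form~\eqref{EqnDefnDn} of $d_n$ and the scaling $p_n(\ell) = a_\ell \log n / n$, $q_n(\ell) = b_\ell \log n / n$ for $\ell \ge 1$, together with $p_n(0) = 1 - u\log n/n$, one finds
\begin{equation*}
\mathbb E\big[e^{t d_n(X)}\big] = p_n(0)\left(\frac{p_n(0)}{q_n(0)}\right)^t + \frac{\log n}{n}\sum_{\ell=1}^L a_\ell^{1+t} b_\ell^{-t} = 1 + \frac{\log n}{n}\left(-u + \sum_{\ell=1}^L a_\ell^{1+t} b_\ell^{-t}\right) + O\!\left(\frac{\log^2 n}{n^2}\right),
\end{equation*}
where the $O$-term absorbs the Taylor expansion of $(p_n(0)/q_n(0))^t$ around $1$; here I would use the hypothesis~\eqref{EqnAbsCont} that all $a_\ell, b_\ell > 0$ and finiteness of $L$ to keep the constant in the $O$-term uniform in $n$. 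Exponentiating, $\log \mathbb E\big[e^{t d_n(X)}\big] = \frac{\log n}{n}\big(-u + \sum_\ell a_\ell^{1+t} b_\ell^{-t}\big) + O(\log^2 n/n^2)$.

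The second step is the Chernoff bound: for any $t > 0$,
\begin{equation*}
\mathbb P\big(\cS(i,H) \ge \delta(n)\big) \le e^{-t\delta(n)} \left(\mathbb E\big[e^{t d_n(X)}\big]\right)^{\frac{n}{\gamma(n)}-1} \le \exp\left(-t\delta(n) + \frac{\log n}{\gamma(n)}\Big(-u + \sum_{\ell=1}^L a_\ell^{1+t} b_\ell^{-t}\Big)(1 + o(1))\right).
\end{equation*}
Now I would choose $t$ to be a fixed constant, say $t = \frac12$ (or any $t \in (0,1)$); then the coefficient of $\frac{\log n}{\gamma(n)}$ is the fixed constant $-u + \sum_\ell \sqrt{a_\ell b_\ell} = -\frac12\sum_\ell(\sqrt{a_\ell}-\sqrt{b_\ell})^2 \le 0$ (and in any case bounded), so that term contributes at most $O(\log n / \gamma(n)) = o(\sqrt{\log n})$, which is dominated by the first term. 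Since $\delta(n) = \sqrt{\log n}/\log\log n \to \infty$, the bound reads $\exp(-\tfrac12\delta(n) + o(\delta(n))) = \exp(-(1+o(1))\tfrac12\delta(n))$. Finally, because $\gamma(n)/n = (\log n)^{\log^{2/3} n}/n = \exp(\log^{2/3}n \cdot \log\log n - \log n)$, while $\exp(-\tfrac12\delta(n))$ decays only like $\exp(-\tfrac12\sqrt{\log n}/\log\log n)$ — wait, this is far too slow. I must instead take $t = t(n) \to \infty$ slowly.

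So the corrected third step: choose $t = t(n)$ growing slowly, e.g.\ $t(n) = \log\log n$ or $t(n) = \delta(n)$ itself. With $t \to \infty$, the dominant term in $\sum_\ell a_\ell^{1+t} b_\ell^{-t} = \sum_\ell a_\ell (a_\ell/b_\ell)^t$ is $a_{\ell^*}(a_{\ell^*}/b_{\ell^*})^t$ where $\ell^*$ maximizes $a_\ell/b_\ell$; call this ratio $\rho = \max_\ell a_\ell/b_\ell$. If $\rho \le 1$ the whole MGF-exponent term is $\le 0$ and we simply get $\mathbb P(\cS(i,H)\ge\delta(n)) \le e^{-t\delta(n)}$, which with $t = \log\log n$ gives $e^{-\delta(n)\log\log n} = e^{-\sqrt{\log n}} = o(\gamma(n)/n)$ since $\sqrt{\log n} \gg \log n \cdot$ — no: we need $e^{-\sqrt{\log n}}$ versus $\gamma(n)/n$; we have $\log(\gamma(n)/n) = \log^{2/3}n\log\log n - \log n \approx -\log n$, so we need $\sqrt{\log n} \gg \log n$, which is false. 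The real point, which I would make carefully, is that $\cS(i,H)$ is actually $0$ with overwhelming probability: the number of non-zero edges from $i$ into $H$ is $\mathrm{Binomial}(\frac{n}{\gamma(n)}, \Theta(\frac{\log n}{n}))$, which has mean $\Theta(\frac{\log n}{\gamma(n)}) \to 0$, so with probability $1 - O(\frac{\log n}{\gamma(n)})$ all edges from $i$ to $H$ have weight $0$, in which case $\cS(i,H) = (\frac{n}{\gamma(n)}-1)d_n(0) = o(1) < \delta(n)$ trivially. The only bad event is having $\ge 1$ nonzero edge AND the resulting $d_n$-values summing above $\delta(n)$; conditioned on having exactly $j \ge 1$ nonzero edges, $\cS(i,H) = (\cdots)d_n(0) + \sum_{r=1}^j d_n(\ell_r)$ with each $d_n(\ell_r)$ bounded by the constant $\cM$, so $\cS(i,H) \le o(1) + j\cM$, which exceeds $\delta(n)$ only if $j \ge \delta(n)/(2\cM) \asymp \sqrt{\log n}/\log\log n$. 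The probability that $\mathrm{Binomial}(\frac{n}{\gamma(n)}, \Theta(\frac{\log n}{n}))$ exceeds $\asymp \sqrt{\log n}/\log\log n$ is, by a standard Poisson/Chernoff tail, at most $(\frac{e\lambda}{j})^j$ with $\lambda = \Theta(\log n/\gamma(n)) \to 0$ and $j \asymp \sqrt{\log n}/\log\log n$, which is $\le (\lambda)^{j/2} = \exp\big(-\Omega(\sqrt{\log n}/\log\log n \cdot \log\gamma(n))\big) = \exp\big(-\Omega(\log^{7/6} n / \log\log n \cdot \log\log n)\big) = \exp(-\Omega(\log^{7/6} n)) = o(\gamma(n)/n)$ since $\log^{7/6}n \gg \log n$.

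The union bound over the $\frac{n}{\gamma(n)} \le n$ nodes of $H$ then gives $\mathbb P(\Delta^c) \le n \cdot o(\gamma(n)/n) = o(\gamma(n)) \cdot$ — one more check: $n \cdot \exp(-\Omega(\log^{7/6}n)) = \exp(\log n - \Omega(\log^{7/6} n)) \to 0$, so in particular $\mathbb P(\Delta) \ge \frac{9}{10}$ for large $n$. The main obstacle, as the scratch-work above shows, is getting the quantitative balance right: a naive Chernoff bound in $t$ with $t$ fixed or even $t \to \infty$ at a moderate rate is \emph{not} strong enough, because $\delta(n)$ grows only like $\sqrt{\log n}/\log\log n$ whereas the union bound costs a factor of nearly $n$; the resolution is to exploit that the relevant sum has only $\asymp \log n/\gamma(n) \to 0$ nonzero terms in expectation, so the real event to control is a Poisson-type upper tail on the \emph{number} of nonzero edges from $i$ into $H$, and there the super-logarithmic growth $\log\gamma(n) = \log^{2/3}n\,\log\log n$ provides the needed slack. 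I would present the argument in the "number of nonzero edges" form rather than the MGF form, as it is both cleaner and more transparent about why the choice $\gamma(n) = (\log n)^{\log^{2/3}n}$ is what it is.
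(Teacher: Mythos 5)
Your final argument is correct, but it reaches the conclusion by a genuinely different route from the paper. The paper applies a Chernoff bound directly to $\cS(i,H)=\sum_j d_n(X_j)$ with the large parameter $t=\sqrt{\log n}\,\log\log n$, chosen so that $t\,\delta(n)=\log n$ exactly pays for the union bound over $|H|$; the bulk of its proof is then a careful expansion of $\log \E\bigl[e^{t d_n(X_1)}\bigr]$ showing it is $O\bigl((\log n)^{C\sqrt{\log n}}/n\bigr)$, which is killed by the factor $n/\gamma(n)$ because $\log^{2/3}n \gg \sqrt{\log n}$. You instead decompose the event combinatorially: the zero-weight edges contribute only $O(\log n/\gamma(n))=o(1)$ to $\cS(i,H)$, each nonzero edge contributes at most the constant $\cM$, so $\cS(i,H)\ge\delta(n)$ forces at least $j\asymp \sqrt{\log n}/\log\log n$ nonzero edges out of a $\mathrm{Binomial}\bigl(\tfrac{n}{\gamma(n)},\Theta(\tfrac{\log n}{n})\bigr)$ count with vanishing mean, and the tail bound $(e\lambda/j)^j\le\exp(-\Omega(\log^{7/6}n))$ beats the union-bound cost of $n$. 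Both proofs ultimately rest on the same fact ($\log\gamma(n)=\log^{2/3}n\,\log\log n$ grows faster than $\sqrt{\log n}$ times logarithmic factors), but your version is more elementary and makes the role of $\gamma(n)$ more transparent, at the price of giving a bound specific to bounded, finitely supported $d_n$; the paper's MGF computation is heavier but is the template that generalizes to its other Chernoff arguments. One small correction to your meta-commentary: the MGF approach is not actually too weak — your trial values $t=O(\log\log n)$ are simply too small, and the paper's choice $t=\sqrt{\log n}\,\log\log n$ (morally the optimal Chernoff parameter for the binomial tail you end up bounding) makes that route go through.
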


Finally, define the events $F_H^{(i)}$ and $F_H$ as follows:
\begin{align*}
F_H^{(i)} &= \text{ node $i \in H$ satisfies } \cS(i, A \setminus H) + \delta(n) <  \cS(i, B) - \cM,\\
F_H &= \cup_{i \in H} F_H^{(i)},
\end{align*}
and define
\begin{align}
\label{EqnDefnRho}
\rho(n) = \mathbb P\Big(F_H^{(i)}\Big).
\end{align}
We have the following result:
\begin{lemma}\label{lemma: final blow}
If $\rho(n) > \frac{\gamma(n)\log 10}{n}$, then $\mathbb P(F) > 1/3$ for sufficiently large $n$.
\end{lemma}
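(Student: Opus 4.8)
The plan is to exploit the independence of the events $F_H^{(i)}$ across $i \in H$ in order to amplify the single-node probability $\rho(n)$ into a probability close to $1$ for $F_H$, then intersect with the event $\Delta$ from Lemma~\ref{lemma: delta} and unwind the definitions so that $F_A$ (and, by symmetry, $F_B$) is seen to occur with probability at least $2/3$, at which point Lemma~\ref{lemma: FA} finishes the argument.

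First I would observe that $F_H^{(i)}$ is measurable with respect to the edge weights $\{w_{ij} : j \in (A \setminus H) \cup B\}$, and that for distinct $i, i' \in H$ these collections of edges are disjoint: any edge counted in $\cS(i, A\setminus H)$ or $\cS(i, B)$ has one endpoint equal to $i \in H$ and the other lying outside $H$, while $A \setminus H$ and $B$ are disjoint from each other and from $H$, so no edge can be shared. Hence the events $\{F_H^{(i)}\}_{i \in H}$ are mutually independent, each with probability $\rho(n)$, and
\[
\mathbb P(F_H) = 1 - (1-\rho(n))^{|H|} = 1 - (1-\rho(n))^{n/\gamma(n)} \ge 1 - \exp\!\left(-\frac{\rho(n)\, n}{\gamma(n)}\right).
\]
Under the hypothesis $\rho(n) > \frac{\gamma(n)\log 10}{n}$ the exponent exceeds $\log 10$, giving $\mathbb P(F_H) > \frac{9}{10}$.

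Next I would combine this with Lemma~\ref{lemma: delta}, which gives $\mathbb P(\Delta) \ge \frac{9}{10}$; a union bound then yields $\mathbb P(F_H \cap \Delta) \ge \frac{4}{5}$. On this event, pick $i \in H$ witnessing $F_H^{(i)}$, so that $\cS(i, A\setminus H) + \delta(n) < \cS(i, B) - \cM$, and use $\Delta$ to bound $\cS(i, H) < \delta(n)$. Since $A$ is the disjoint union of $A \setminus H$ and $H$ and $i \in H$, we have the identity $\cS(i, A\setminus\{i\}) = \cS(i, A\setminus H) + \cS(i, H)$, whence
\[
\cS(i, A\setminus\{i\}) < \cS(i, A\setminus H) + \delta(n) < \cS(i, B) - \cM .
\]
This is precisely the event $F_A$, witnessed by $i \in A$, so $\mathbb P(F_A) \ge \mathbb P(F_H \cap \Delta) \ge \frac{4}{5} \ge \frac{2}{3}$. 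Applying Lemma~\ref{lemma: FA} gives $\mathbb P(F) \ge \frac{1}{3}$; moreover, tracing the union-bound step in that lemma with $\mathbb P(F_A) = \mathbb P(F_B) \ge \frac{4}{5}$ yields $\mathbb P(F_A \cap F_B) \ge \frac{3}{5}$, and since $F_A \cap F_B$ forces maximum likelihood to fail, we obtain the strict conclusion $\mathbb P(F) > \frac{1}{3}$.

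The only non-bookkeeping point — and thus the main obstacle — is verifying the mutual independence of the events $F_H^{(i)}$: one must check carefully that the edge-weight sets entering $\cS(i, A\setminus H)$ and $\cS(i, B)$ for different $i \in H$ genuinely share no edge. This is exactly what forces $H$ to be carved out of $A$, and it explains why the definition of $F_H^{(i)}$ uses $A \setminus H$ rather than $A \setminus \{i\}$; the price paid is the extra slack term $\delta(n)$, which is precisely the quantity controlled by $\Delta$. Everything else reduces to a union bound and the elementary additivity of $\cS(i, \cdot)$ over disjoint vertex sets.
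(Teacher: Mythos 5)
Your proof is correct and follows essentially the same route as the paper's: amplify $\rho(n)$ over the independent events $F_H^{(i)}$ to get $\mathbb P(F_H) > 9/10$, intersect with $\Delta$ to deduce $F_A$, and invoke Lemma~\ref{lemma: FA}. Your use of $1-x \le e^{-x}$ is in fact a little cleaner than the paper's limiting argument (which needs a case split on whether $\rho(n) = o(1)$), and your explicit verification of the edge-disjointness underlying the independence of the $F_H^{(i)}$ and of the inclusion $\Delta \cap F_H \subseteq F_A$ fills in details the paper leaves implicit.
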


\begin{proof}

We first show that $\mathbb P(F_H) > \frac{9}{10}$ for large enough $n$. 
Since the events $F_H^{(i)}$ are independent, we have
\begin{align*}
\mathbb P(F_H) = \mathbb P \left( \cup_{i \in H} F_H^{(i)} \right) = 1 - \mathbb P \left(  \cap_{i \in H} \left(F_H^{(i)}\right)^c \right) = 1 - (1-\rho(n))^{\frac{n}{\gamma(n)}}.
\end{align*}
Clearly, if $\rho(n)$ is not $o(1)$, then $\mathbb P(F)$ tends to 1 and we are done. If $\rho(n)$ is $o(1)$, then
\begin{align*}
\lim_{n \to \infty} (1-\rho(n))^{\frac{n}{\gamma(n)}} = \lim_{n \to \infty} (1-\rho(n))^{\frac{1}{\rho(n)} \frac{\rho(n)n}{\gamma(n)}} &= \lim_{n \to \infty} \exp \left(- \frac{\rho(n)n}{\gamma(n)} \right) < \frac{1}{10},
\end{align*}
where the last inequality used the fact that $\rho(n) > \frac{\gamma(n)\log 10}{n}$. Hence, $\mathbb P(F_H) > \frac{9}{10}$, as claimed.

Now note that $\Delta \cap F_H \subseteq F_A$. By Lemma~\ref{lemma: delta}, we also have $\mathbb P(\Delta) \ge \frac{9}{10}$. Hence,
$$\mathbb P(F_A) \geq \mathbb P(\Delta) + \mathbb P(F_H) - 1 \geq \frac{8}{10} > \frac{2}{3},$$
which combined with Lemma \ref{lemma: FA} implies the desired result.
\end{proof}

Let $\{X_i\}_{i \geq 1}$ be a sequence of i.i.d.\ random variables distributed according to $p_n$, and let $\{Y_i\}_{i \geq 1}$ be a sequence of i.i.d.\ random variables distributed according to $q_n$. From the definition~\eqref{EqnDefnRho} of $\rho(n)$, and using independence, we have
\begin{align}
\rho(n) &= \mathbb P \left( \sum_{i=1}^{n} d_n(Y_i) - \sum_{i=1}^{n-\frac{n}{\gamma(n)}} d_n(X_i) > \delta(n) + \cM \right) \notag \\
&\geq P \left( \sum_{i=1}^{n-\frac{n}{\gamma(n)}} d_n(Y_i) - \sum_{i=1}^{n-\frac{n}{\gamma(n)}} d_n(X_i) > \delta(n) + \cM - \hat \delta(n) \right)\times \mathbb P\left( \sum_{i= n-\frac{n}{\gamma(n)}+1}^n d_n(Y_i) \geq \hat \delta(n)\right), \label{eq: product}
\end{align}
for any $\hat \delta(n)$. We will choose a suitable $\hat \delta(n)$ so that
\begin{equation}
\label{EqnHari}
\mathbb P\left( \sum_{i= n-\frac{n}{\gamma(n)}+1}^n d_n(Y_i) \geq \hat \delta(n)\right) \longrightarrow 1.
\end{equation}
Note that $d_n(Y_i)$ is a random variable satisfying 
\begin{equation*}
\mathbb P \left( d_n(Y_i) = \log \left\{\frac{1-u\log n/n}{1 - v\log n/n}\right\}\right) = 1 - \frac{v\log n}{n}.
\end{equation*}
Thus, 
\begin{equation*}
\mathbb P \left(d_n(Y_i) = \log \left\{\frac{1-u\log n/n}{1 - v\log n/n}\right\}, \text{ for all } n-\frac{n}{\gamma(n)}-1 \leq i \leq n \right) = \left(1 - \frac{v\log n}{n}\right)^{\frac{n}{\gamma(n)}}.
\end{equation*}
We may check that
\begin{equation*}
\left(1 - \frac{v\log n}{n}\right)^{\frac{n}{\gamma(n)}} \longrightarrow 1, 
\end{equation*}
implying that
\begin{equation*}
\mathbb P \left(\sum_{i= n-\frac{n}{\gamma(n)}+1}^n d_n(Y_i) = \frac{n}{\gamma(n)} \cdot \log \left\{\frac{1-u\log n/n}{1-v\log n/n}\right\} \right) \longrightarrow 1.
\end{equation*}
Thus, equation~\eqref{EqnHari} holds with 
\begin{equation}
\hat\delta(n) = \Bigg|  \frac{n}{\gamma(n)} \cdot \log \left\{\frac{1-u\log n/n}{1-v\log n/n}\right\} \Bigg|.
\end{equation}
Since $$\hat \delta(n)  = O\left(\frac{\log n}{\gamma(n)}\right) = o(\sqrt{\log n}),$$ 
we have $\delta(n) + \cM - \hat\delta(n) = o(\sqrt{\log n})$.

Recall the definition of the function $T$ in equation~\eqref{EqnT}. We have the following technical lemma:

\begin{lemma}[Proof in Appendix~\ref{AppLemHorrible}] \label{lemma: horrible}
Let $\omega(n) = o(\sqrt{\log n})$ and $N(n) = n(1 + o(1))$. Then
$$-\log T\left(N(n), p_n, q_n, \omega(n)\right) \leq \left(\sum_{\ell=1}^L  \left(\sqrt{a_\ell} - \sqrt{b_\ell}\right)^2 \right)\log n + o(\log n).$$
\end{lemma}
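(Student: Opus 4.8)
The plan is to prove the bound by an exponential change of measure, tilted at the parameter $t=\tfrac12$ that already appears in the Chernoff bounds of Section~\ref{SecThmRenyiML}. Note that bounding $-\log T(N,p_n,q_n,\omega)$ from above is a large-deviation \emph{lower}-bound problem: we must show the tail probability $T$ is at least $n^{-C-o(1)}$, where $C=\sum_{\ell=1}^{L}(\sqrt{a_\ell}-\sqrt{b_\ell})^{2}$. Write $X_i\sim p_n$ and $Y_i\sim q_n$, all i.i.d.\ and mutually independent, set $Z_i=d_n(Y_i)-d_n(X_i)$ and $S=\sum_{i=1}^{N}Z_i$, so that $T=\mathbb{P}(S\ge\omega)$. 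Let $\widetilde{\mathbb{P}}$ be the tilted law with $d\widetilde{\mathbb{P}}=e^{NI}e^{S/2}\,d\mathbb{P}$; this is a probability measure because $\E_{\mathbb{P}}[e^{S/2}]=\big(\sum_\ell\sqrt{p_n(\ell)q_n(\ell)}\big)^{2N}=e^{-NI}$. A direct computation shows that under $\widetilde{\mathbb{P}}$ the pairs $(X_i,Y_i)$ are still i.i.d., with $X_i$ and $Y_i$ each distributed according to the geometric-mean law $m(\ell)\propto\sqrt{p_n(\ell)q_n(\ell)}$; in particular, because the moment generating function $M(t)=\E_{\mathbb{P}}[e^{tZ_1}]$ is symmetric about $t=\tfrac12$ (so $M'(\tfrac12)=0$), we get $\E_{\widetilde{\mathbb{P}}}[Z_i]=0$, which is exactly why $t=\tfrac12$ is the right tilt.

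\textbf{Main steps.} First, the change of measure gives the exact identity $T=e^{-NI}\,\E_{\widetilde{\mathbb{P}}}\big[e^{-S/2}\,\mathbf 1\{S\ge\omega\}\big]$. Second, let $v_n^{2}=\operatorname{Var}_{\widetilde{\mathbb{P}}}(S)=2N\operatorname{Var}_m(d_n)$; using the explicit form~\eqref{EqnDefnDn} of $d_n$ --- with $d_n(0)\to0$, with $d_n(\ell)=\log(a_\ell/b_\ell)$ a fixed constant that is nonzero for some $\ell$ whenever $\mathbf a\ne\mathbf b$ (the case $\mathbf a=\mathbf b$ being trivial), and with $m(\ell)=\Theta(\log n/n)$ for $\ell\ge1$ --- one checks $\operatorname{Var}_m(d_n)=\Theta(\log n/n)$, hence $v_n=\Theta(\sqrt{\log n})$. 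Restricting the expectation in the identity to the event $\{\omega\le S\le\omega+v_n\}$ and using $e^{-S/2}\ge e^{-(\omega+v_n)/2}$ there,
\[
T\;\ge\;e^{-NI}\,e^{-(\omega+v_n)/2}\;\widetilde{\mathbb{P}}\big(\omega\le S\le\omega+v_n\big).
\]
Third, we show $\widetilde{\mathbb{P}}(\omega\le S\le\omega+v_n)\ge c$ for some constant $c>0$ and all large $n$: under $\widetilde{\mathbb{P}}$ the $Z_i$ are i.i.d., bounded uniformly in $n$ (since $d_n(0)\to0$ and the $d_n(\ell)$, $\ell\ge1$, are fixed constants), centered, with variance $\asymp\log n/n$, so the Lyapunov ratio $\E_{\widetilde{\mathbb{P}}}|Z_1|^{3}/\big(\sqrt{N}\,(\operatorname{Var}_{\widetilde{\mathbb{P}}}Z_1)^{3/2}\big)=O(1/\sqrt{\log n})\to 0$, and Berry--Esseen gives $\sup_x|\widetilde{\mathbb{P}}(S\le v_n x)-\Phi(x)|\to0$; since $\omega/v_n\to0$, the rescaled window tends to $[0,1]$ and $\widetilde{\mathbb{P}}(\omega\le S\le\omega+v_n)\to\Phi(1)-\Phi(0)>0$. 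Fourth, $NI=C\log n+o(\log n)$ by~\eqref{eq: approx I} and $N=n(1+o(1))$, while $\omega+v_n=O(\sqrt{\log n})=o(\log n)$, so
\[
-\log T\;\le\;NI+\tfrac12(\omega+v_n)+O(1)\;=\;C\log n+o(\log n),
\]
as claimed.

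\textbf{Main obstacle.} The crux is the anti-concentration bound $\widetilde{\mathbb{P}}(\omega\le S\le\omega+v_n)\ge c>0$ in the third step; this is precisely where tilting \emph{to mean zero} (the choice $t=\tfrac12$) pays off, and it forces us to invoke a quantitative central limit theorem for the triangular array $\{Z_i\}_{i\le N}$, whose individual variances shrink at rate $\log n/n$ while the number of summands grows like $n$. Verifying the Lyapunov condition rests on the uniform boundedness of $d_n$ (hence of $Z_i$) and on the matching lower bound $\operatorname{Var}_m(d_n)=\Omega(\log n/n)$, which is where assumption~\eqref{EqnAbsCont} and $\mathbf a\ne\mathbf b$ are used. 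The remaining ingredients --- the change-of-measure identity, the pointwise bound $e^{-S/2}\ge e^{-(\omega+v_n)/2}$ on the good event, and the expansion $NI=C\log n+o(\log n)$ --- are routine.
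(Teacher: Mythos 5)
Your proposal is correct and follows essentially the same route as the paper: an exponential tilt at $t^\star=\tfrac12$ (under which the summands become centered with variance $\Theta(\log n/n)$), restriction to a window of width $\Theta(\sqrt{\log n})$ above $\omega(n)$ whose tilted probability stays bounded away from zero by a central-limit argument, and the expansion $NI = \bigl(\sum_{\ell}(\sqrt{a_\ell}-\sqrt{b_\ell})^2\bigr)\log n + o(\log n)$. The only difference is cosmetic: you justify the anti-concentration step via Berry--Esseen for the triangular array, whereas the paper uses the window $[\omega(n),\log^{3/4}n)$ and proves a CLT for the tilted variables directly through moment generating functions (its Lemma~\ref{lemma: clt}); both arguments implicitly require the nondegenerate case $\mathbf a\neq\mathbf b$, which you at least flag explicitly.
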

Noting that
\begin{equation*}
T\left(n-\frac{n}{\gamma(n)}, p_n, q_n, \delta(n) + \cM -\hat \delta(n)\right) = \mathbb P \left( \sum_{i=1}^{n-\frac{n}{\gamma(n)}} d_n(Y_i) - \sum_{i=1}^{n-\frac{n}{\gamma(n)}} d_n(X_i) \geq \delta(n) + \cM - \hat \delta(n) \right),
\end{equation*}
and using Lemma~\ref{lemma: horrible}, we conclude that
\begin{equation}\label{eq: pumpkin}
 -\log T\left(n-\frac{n}{\gamma(n)}, p_n, q_n, \delta(n) + \cM -\hat \delta(n)\right) \leq  \left(\sum_{i=1}^L  \left(\sqrt{a_i} - \sqrt{b_i}\right)^2 \right)\log n + o(\log n).
\end{equation}

Substituting the bounds~\eqref{EqnHari} and~\eqref{eq: pumpkin} into equation \eqref{eq: product}, we then conclude that
\begin{equation*}
-\log \rho(n) \leq \left(\sum_{i=1}^L  \left(\sqrt{a_i} - \sqrt{b_i}\right)^2 \right)\log n + o(\log n).
\end{equation*}
In particular, when
$$\sum_{\ell=1}^L  \left(\sqrt{a_\ell} - \sqrt{b_\ell}\right)^2 < 1,$$
we have
$$-\log \rho(n) \leq \log n -\log \gamma(n) - \log\log 10,$$
for sufficiently large $n$. Lemma~\ref{lemma: final blow} then implies that maximum likelihood fails with probability at least $\frac{1}{3}$, completing the proof of the theorem.

\section{Discussion and open questions}
\label{SecDiscussion}

We have established thresholds for exact recovery in the framework of weighted stochastic block models, where edge weights may be drawn from arbitrary distributions. Whereas previous investigations had concentrated on the setting of unweighted edges, we show that the same techniques may be extended to the weighted case. Furthermore, the Renyi divergence of order $\frac{1}{2}$ between the distributions of edges coming from within-community and between-community connections arises as a fundamental quantity governing the hardness of the community estimation problem. \\

The conclusions of this paper leave open a number of open questions regarding phase transitions in general weighted stochastic block models. We conclude our paper by highlighting several interesting directions for future research.

\begin{itemize}

\item \textbf{Thresholds for exact recovery under continuous distributions.} Although the error bound for maximum likelihood derived in Theorem~\ref{ThmRenyiML} does not impose any conditions on the distributions $p_n$ and $q_n$, the proofs of the upper and lower bounds in Section~\ref{SecDiscrete} assume a specific setting involving discrete distributions with the same support. However, situations may arise where the observed edge weights are generated from continuous distributions. The submatrix localization problem in Section \ref{section: submatrix} provides one such example. It would be interesting to see if the Renyi divergence between $p_n$ and $q_n$ again plays a role in characterizing the threshold for exact recovery in the continuous case. However, a number of hurdles exist in extending our proof of impossibility to continuous distributions. Just as with discrete distributions, our proof technique does not allow for distributions that are not absolutely continuous with respect to each other. Furthermore, we have assumed the existence of a finite upper bound $\cM$ on the likelihood ratio between $p_n$ and $q_n$. Such a bound may not exist even for absolutely continuous distributions; for example, no such bound exists for $p_n = \cN(\mu_n, 1)$ and $q_n = \cN(0,1)$ in the submatrix localization problem. Finally, the emergence and relevance of the Renyi divergence term as a sharp threshold in this problem may be attributed in part to the specific regime we have considered, where the probabilities of connection scale according to $\Theta(\log n/n)$. Mossel et al.~\cite{MosEtal14} have shown that for Bernoulli distributions $p_n$ and $q_n$ in slightly denser regimes, where the probabilities scale according to $\Theta\left(\frac{\log^3 n}{n}\right)$, the threshold is no longer simply a function of the Renyi divergence.  %Note that maximum likelihood guarantees could always be obtained for suitably close discretizations of the continuous distributions; however, proving impossibility results for maximum likelihood may require different proof techniques. Complications: $\cM$, regime, absolute continuity.

\item \textbf{General thresholds for weighted distributions.} Mossel et al.~\cite{MosEtal14} derive a very general theorem involving thresholds for the binary stochastic block model when $K = 2$. Defining
\begin{equation}
\label{EqnCamel}
P(n,p_n, q_n) = \mprob \left(\sum_{i=1}^n Y_i \ge  \sum_{i=1}^n X_i\right),
\end{equation}
where $X \sim p_n$ and $Y \sim q_n$, and $p_n$ and $q_n$ are Bernoulli distributions such that $p_n$ stochastically dominates $q_n$, Mossel et al.~\cite{MosEtal14} prove that exact recovery of the two communities is possible if and only if $P(n, p_n, q_n) = o\left(\frac{1}{n}\right)$. On the other hand, there exists an estimator for which the fraction of misclassified nodes converges to 0 if and only if $P(n, p_n, q_n) = o(1)$. It would be interesting to derive such a statement when $p_n$ and $q_n$ are general distributions, which could then be used to prove our results in Section~\ref{SecDiscrete} as a special case. Specifically, one might construct the analog of expression~\eqref{EqnCamel} to be
$$P(n, p_n, q_n) = \mathbb P \left(\sum_{i=1}^n d_n(Y_i) - \sum_{i=1}^n d_n(X_i) \geq 0 \right),$$
and conjecture analogous results about exact and partial recovery based on the rate at which $P(n, p_n, q_n)$ converges to 0.

\item \textbf{Efficient algorithms for exact recovery in weighted stochastic block models.} Hajek et al.~\cite{HajEtal14, HajEtal15} and Gao et al.~\cite{GaoEtal15} provide efficiently computable algorithms that achieve the threshold for exact recovery in the case of binary stochastic block models. Now that we have characterized the threshold for a more general class of weighted distributions, it would be interesting to see if similar efficient algorithms may be derived to obtain community assignments in the weighted case.

\end{itemize}

%%%%%%

\appendix

\section{Proofs of technical lemmas}

In this section, we collect the proofs of the more technical lemmas used in proving the main results.

\subsection{Proof of Lemma~\ref{lemma: delta}}
\label{AppLemDelta}

Let $\Delta_i$ be the event  $\cS(i , H) < \delta(n).$ By a simple union bound calculation, we have
\begin{align*}
\mathbb P(\Delta) &=1- \mathbb P(\Delta^c) = 1 - \mathbb P \left( \cup_{i \in H} \Delta_i^c \right) \ge 1 - |H| \cdot \mathbb P(\Delta_i^c).
\end{align*}
We will show that $$|H| \cdot \mathbb P(\Delta_i^c) = o(1),$$ by showing that
$$\log |H| + \log \mathbb P(\Delta_i^c) \to -\infty,$$ as $n \to \infty$. 
Let the weights of the edges from $i$ to nodes within $H$ be the random variables $\{X_1, \dots, X_{|H| - 1}\}$. Note that the $X_i$'s are independent and identically distributed according to $p_n$. We have
\begin{align*}
\mathbb P(\Delta_i^c) &= \mathbb P\left( \cS(i, H) \geq \frac{\sqrt{\log n}}{\log \log n}\right) = \mathbb P\left( \sum_{j = 1}^ {|H|-1} d_n(X_i) \ge \frac{\sqrt{\log n}}{\log \log n}\right) \leq \inf_{t > 0} \left\{ \frac{\mathbb E\left[ e^{td_n(X_1)}\right]^{|H|-1}}{e^{\frac{t\sqrt{\log n}}{\log \log n}}}\right\},
\end{align*}
using a Chernoff bound in the last inequality. Thus, for $t > 0$, we have
\begin{align*}
\log |H| + \log \mathbb P(\Delta_i^c) &\leq \log \frac{n}{\gamma(n)} + \log \frac{\mathbb E\left[ e^{td_n(X_1)}\right]^{ \frac{n}{\gamma(n)}-1}}{e^{\frac{t\sqrt{\log n}}{\log \log n}}}\\
&= \log \frac{n}{\gamma(n)} + \left(\frac{n}{\gamma(n)}-1\right)\log \mathbb E\left[e^{td_n(X_1)} \right] - \frac{t\sqrt{\log n}}{\log \log n}.
\end{align*}
Picking $t = \sqrt{\log n}\log\log n$, the last expression simplifies to
\begin{align}\label{eq: master}
-\log \gamma(n) +  \left(\frac{n}{\gamma(n)}-1\right)\log \mathbb E\left[e^{\sqrt{\log n}(\log\log n) d_n(X_1)} \right].
\end{align}
We now analyze $\log \mathbb E\left[e^{\sqrt{\log n}(\log\log n) d_n(X_1)} \right]$ carefully. Note that
\begin{align*}
\log \mathbb E\left[e^{\sqrt{\log n}(\log\log n) d_n(X_1)} \right] &= \log \Bigg[\left( \frac{1-u\log n/n}{1 - v\log n/n}\right)^{\sqrt{\log n}\log \log n} \left(1-\frac{u\log n}{n}\right)\\
&\qquad \qquad \qquad \qquad \qquad + \sum_{\ell=1}^L \left(\frac{a_\ell}{b_\ell}\right)^{\sqrt{\log n}\log\log n} \frac{a_\ell\log n}{n}\Bigg]\\
&:= \log(1 + \mu_n + \nu_n),
\end{align*}
where
\begin{align*}
1 + \mu_n &= \left( \frac{1-u\log n/n}{1 - v\log n/n}\right)^{\sqrt{\log n}\log \log n} \left(1-\frac{u\log n}{n}\right), \quad \text{and} \\
\nu_n &= \sum_{\ell=1}^L \left(\frac{a_\ell}{b_\ell}\right)^{\sqrt{\log n}\log\log n} \left(\frac{a_\ell\log n}{n}\right).
\end{align*}

The following bound holds for $\nu_n$:
\begin{equation*}
\nu_n = \sum_{\ell=1}^L (\log n)^{\sqrt{\log n}\log \frac{a_\ell}{b_\ell}}\left(\frac{a_\ell \log n}{n}\right) \leq C_1 \frac{(\log n)^{C_2\sqrt{\log n}}}{n},
\end{equation*}
for suitable constants $C_1, C_2$. For $\mu_n$, we have
\begin{align*}
\mu_n &= \left( \frac{1-u\log n/n}{1 - v\log n/n}\right)^{\sqrt{\log n}\log \log n} \left(1-\frac{u\log n}{n}\right) - 1\\
&= \left(\left( \frac{1-u\log n/n}{1 - v\log n/n}\right)^{n/\log n}\right)^{\frac{(\log n)^{3/2}\log \log n}{n} } \left(1-\frac{u\log n}{n}\right) - 1.
\end{align*}
The term $\left( \frac{1-u\log n/n}{1 - v\log n/n}\right)^{n/\log n}$ tends to a constant, $\exp(v-u)$. Thus, for large enough $n$, we may find constants $0 <c_1 < c_2$ such that $\left( \frac{1-u\log n/n}{1 - v\log n/n}\right)^{n/\log n} \in (c_1, c_2)$. Using the Taylor series approximation of $c_i^x$ near 0, we have
$$c_i^{\frac{(\log n)^{3/2}\log \log n}{n}} = 1 + \frac{(\log n)^{3/2}\log \log n}{n}\log c_i + O\left(\left(\frac{(\log n)^{3/2}\log \log n}{n}\right)^2\right),$$
so
\begin{multline*}
c_i^{\frac{(\log n)^{3/2}\log \log n}{n}}\left(1-\frac{u\log n}{n}\right) - 1= \frac{(\log n)^{3/2} \log\log n}{n}\log c_i + O\left(\left(\frac{(\log n)^{3/2}\log \log n}{n}\right)^2\right) \\
- \frac{u\log n}{n} \left(1 + \frac{(\log n)^{3/2} \log \log n}{n} \log c_i\right).
\end{multline*}
Thus, for large enough $n$, there exists a constant $C_3$ that satisfies
\begin{equation*}
|\mu_n| \leq \frac{C_3\log^2 n}{n}.
\end{equation*}
Using the bound
\begin{align*}
\log (1 + \mu_n + \nu_n) \leq |\mu_n| +|\nu_n|,
\end{align*}
we conclude that
\begin{equation*}
\log \mathbb E\left[e^{\sqrt{\log n}(\log\log n) d_n(X_1)} \right] \leq C'_1 \frac{(\log n)^{C'_2\sqrt{\log n}}}{n},
\end{equation*}
for a suitable constants $C'_1$ and $C'_2$. Returning to the expression~\eqref{eq: master}, we conclude that
\begin{multline*}
-\log \gamma(n) +  \left(\frac{n}{\gamma(n)}-1\right)\log \mathbb E\left[e^{\sqrt{\log n} (\log\log n) d_n(X_1)} \right] \\
\le -\log \gamma(n) + \left(\frac{n}{\gamma(n)}-1\right) C'_1 \frac{(\log n)^{C'_2\sqrt{\log n}}}{n}.
\end{multline*}
Substituting $\gamma(n) = (\log n)^{\log^{\frac{2}{3}} n}$, we arrive at the upper bound
\begin{align*}
-\log^{\frac{2}{3}}n(\log\log n) + \left(\frac{n}{(\log n)^{\log^{\frac{2}{3}} n}} - 1\right)C'_1 \frac{(\log n)^{C'_2\sqrt{\log n}}}{n}.
\end{align*}
It is easy to check that as $n \rightarrow \infty$, we have
$$\left(\frac{n}{(\log n)^{\log^{\frac{2}{3}} n}} - 1\right)C'_1 \frac{(\log n)^{C'_2\sqrt{\log n}}}{n} \to 0,$$
and
$$-\log^{\frac{2}{3}}n(\log\log n) \to -\infty.$$
This concludes the proof.

\subsection{Proof of Lemma~\ref{lemma: horrible}}
\label{AppLemHorrible}

We will use the proof strategy found in Zhang and Zhou~\cite{zhangminimax}. Let 
$$Z = d_n(Y) - d_n(X),$$
where $X \sim p_n$ and $Y \sim q_n$. 
Let $M(t) = \mathbb E e^{tZ}$, and recall the following results from the proof of Theorem~\ref{ThmRenyiML}:
\begin{align*}
t^\star & = \arg\min_{t > 0} M(t) = \frac{1}{2}, \\
M(t^\star) & = \left(\sum_{\ell = 0}^L \sqrt{p_n(\ell)q_n(\ell)}\right)^2, \\
I & = -\log M(t^\star) = -2\log \left(\sum_{\ell = 0}^L \sqrt{p_n(\ell)q_n(\ell)}\right).
\end{align*}
Let $S_N = \sum_{i=1}^{N(n)} Z_i$, where the $Z_i$'s are i.i.d.\ and distributed according to $Z$, and denote the distribution of $Z$ by $p_Z$. Define $$\eta(n) = \log^{\frac{3}{4}} n.$$ Then
\begin{align}
\mathbb P\left(S_N \geq \omega(n)\right) &\geq \sum_{z: S_N \in [\omega(n), \eta(n))} \prod_{i=1}^{N(n)} p_Z(z_i) \notag \\
&\geq \frac{M^{N(n)}(t^\star)}{e^{t^\star \eta(n)}} \sum_{z: S_N \in [\omega(n), \eta(n))} \prod_{i=1}^{N(n)} \frac{e^{t^\star z_i}p_Z(z_i)}{M(t^\star)} \notag \\
&=\exp\left(-N(n)I - \frac{\eta(n)}{2}\right)\sum_{z: S_N \in [\omega(n), \eta(n))} \prod_{i=1}^{N(n)} \frac{e^{t^\star z_i}p_Z(z_i)}{M(t^\star)}, \label{eq: bumbum}
\end{align}
where the second inequality uses the fact that $e^{t^\star \eta(n)} \geq e^{t^\star \sum_i z_i}$ when $\sum_{i=1}^{N(n)} z_i < \eta(n)$.

Now denote $r(w) = \frac{e^{t^\star w}p_Z(w)}{M(t^\star)}$, and note that $r$ defines a probability distribution. Defining $W_1, W_2, \dots, W_n$ to be i.i.d.\ random variables with probability mass function $r(w)$, we then have
\begin{equation}
\label{eq: dundun}
\sum_{z: S_N \in [\omega(n), \eta(n))} \prod_{i=1}^{N(n)} \frac{e^{t^*z_i} p_Z(z_i)}{M(t^*)} = \mathbb P\left(\omega(n) \le \sum_{i=1}^{N(n)} W_i < \eta(n)\right).
\end{equation}
We also have the following concentration result:
\begin{lemma}[Proof in Appendix~\ref{AppLemCLT}]\label{lemma: clt}
Let $\{W_i\}_{i \geq 1}$ be i.i.d.\ random variables distributed as $r(w)$. Then 
\begin{equation*}
\frac{\sum_{i=1}^n W_i}{\sqrt{\log n}} \stackrel{d}{\longrightarrow} \cN(0, \nu^2),
\end{equation*}
as $n \rightarrow \infty$, where $\nu > 0$ is a constant.
\end{lemma}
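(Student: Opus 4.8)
The plan is to treat $\sum_{i=1}^n W_i$ as a row sum in a triangular array --- the tilted law $r = r_n$ depends on $n$ through $p_n$ and $q_n$ --- and apply the Lindeberg--Feller central limit theorem. The scaling $\sqrt{\log n}$ appears because the variance of a single $W_i$ turns out to be of order $\frac{\log n}{n}$, so that $n$ of them accumulate variance of order $\log n$.

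First I would record the first two moments of $W$ under $r$. Recalling from the proof of Theorem~\ref{ThmRenyiML} that $t^\star = \frac12$ is the minimizer of $M$, exponential tilting gives $\mathbb E_r[W] = M'(t^\star)/M(t^\star) = 0$ and $\mathrm{Var}_r(W) = \mathbb E_r[W^2] = M''(t^\star)/M(t^\star)$. To evaluate $\mathbb E_r[W^2]$ I would use the identity
$$e^{(d_n(y) - d_n(x))/2}\, p_n(x)\, q_n(y) = \sqrt{p_n(x) q_n(x)}\,\sqrt{p_n(y) q_n(y)}, \qquad x, y \in \{0, 1, \dots, L\},$$
so that, writing $s_\ell = \sqrt{p_n(\ell) q_n(\ell)}$ and recalling $M(t^\star) = \bigl(\sum_\ell s_\ell\bigr)^2$,
$$\mathbb E_r[W^2] = \frac{1}{\bigl(\sum_{\ell = 0}^L s_\ell\bigr)^2} \sum_{x, y = 0}^L \bigl(d_n(y) - d_n(x)\bigr)^2 s_x s_y.$$
Plugging in $s_0 = 1 - O(\log n / n)$ and $s_\ell = \frac{\sqrt{a_\ell b_\ell}\log n}{n}$ for $\ell \geq 1$, together with $d_n(0) \to 0$ and $d_n(\ell) \to \log(a_\ell/b_\ell)$ from \eqref{EqnDefnDn}, the terms with both $x, y \geq 1$ are $O(\log^2 n / n^2)$ and negligible, and I obtain
$$\frac{n}{\log n}\, \mathbb E_r[W^2] \longrightarrow \nu^2 := 2 \sum_{\ell = 1}^L \sqrt{a_\ell b_\ell} \left( \log \frac{a_\ell}{b_\ell} \right)^2 > 0,$$
the strict positivity using \eqref{EqnAbsCont} together with $\mathbf a \neq \mathbf b$ (if $\mathbf a = \mathbf b$ then $p_n \equiv q_n$ and exact recovery is trivially impossible). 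Hence $\sum_{i=1}^n \mathrm{Var}\bigl(W_i/\sqrt{\log n}\bigr) = \frac{n}{\log n}\mathbb E_r[W^2] \to \nu^2$, and in particular $\max_i \mathrm{Var}\bigl(W_i/\sqrt{\log n}\bigr) \to 0$.

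The Lindeberg condition is then immediate: since $d_n(\ell)$ is a fixed constant for each $\ell \geq 1$ and $d_n(0) \to 0$, there is a constant $K_0$, independent of $n$, with $|W| \leq |d_n(Y)| + |d_n(X)| \leq K_0$ almost surely; hence for every $\epsilon > 0$ and every $n$ large enough that $\epsilon \sqrt{\log n} > K_0$,
$$\sum_{i=1}^n \mathbb E\!\left[ \frac{W_i^2}{\log n}\, \mathbf 1\!\left\{ |W_i| > \epsilon \sqrt{\log n} \right\} \right] = 0.$$
The Lindeberg--Feller theorem now yields $\frac{1}{\sqrt{\log n}} \sum_{i=1}^n W_i \stackrel{d}{\longrightarrow} \cN(0, \nu^2)$, and the identical argument --- with $n$ replaced by $N(n) = n(1 + o(1))$, which alters the limiting variance only by the factor $N(n)/n \to 1$ --- delivers the version with $N(n)$ summands needed in the proof of Lemma~\ref{lemma: horrible}.

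The only genuine work is the variance asymptotics: confirming that the tilted second moment is $(\nu^2 + o(1)) \frac{\log n}{n}$ with $\nu^2 > 0$. With the tilting identity above this reduces to a Taylor expansion in $\frac{\log n}{n}$, and the positivity of $\nu^2$ is the single point at which the nondegeneracy hypothesis \eqref{EqnAbsCont} (and $\mathbf a \ne \mathbf b$) is used. The remaining ingredients --- $\mathbb E_r[W] = 0$ exactly, and the vanishing of the truncated second moments --- come for free from the choice $t^\star = \frac12$ and from the uniform boundedness of $d_n$, respectively.
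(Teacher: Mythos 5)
Your proof is correct, and it takes a genuinely different route from the paper's. The paper proves the lemma by showing pointwise convergence of the moment generating function of $\frac{1}{\sqrt{\log n}}\sum_{i=1}^n W_i$: it first observes that the tilted law $r$ is symmetric (the mass $\zeta$ at $d_n(y)-d_n(x)$ equals $\frac{\sqrt{p_n(x)q_n(x)}\sqrt{p_n(y)q_n(y)}}{(\sum_\ell \sqrt{p_n(\ell)q_n(\ell)})^2}$, which is symmetric in $x,y$ --- the same tilting identity you use), writes $\E e^{tW} = 1 + \sum_j r(\hat w_j)\bigl(e^{t\hat w_j/2}-e^{-t\hat w_j/2}\bigr)^2$, Taylor-expands at $t/\sqrt{\log n}$ to get $1 + \frac{Ct^2}{n} + o(n^{-1})$, and raises to the $n$-th power. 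You instead invoke the Lindeberg--Feller theorem for the triangular array $W_i/\sqrt{\log n}$, which is arguably cleaner: the exact centering $\E_r[W]=0$ falls out of $M'(t^\star)=0$, the Lindeberg condition is vacuous for large $n$ by the uniform bound on $d_n$, and the only real work is the second-moment asymptotics $\frac{n}{\log n}\E_r[W^2] \to \nu^2 = 2\sum_{\ell=1}^L \sqrt{a_\ell b_\ell}\,(\log\frac{a_\ell}{b_\ell})^2$, which matches the constant $2C$ implicit in the paper's expansion. Your version has two advantages: it produces an explicit formula for $\nu^2$, and it is the only place where the positivity $\nu^2>0$ is actually verified (via \eqref{EqnAbsCont} and $\mathbf a \neq \mathbf b$), a point the paper's proof asserts in the lemma statement but never checks; the paper's MGF route, for its part, avoids having to name the CLT variant but buys nothing beyond the weak convergence that Lemma~\ref{lemma: horrible} actually needs. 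Your remark that the whole argument is a triangular-array statement (since $r=r_n$) and that replacing $n$ by $N(n)=n(1+o(1))$ changes nothing is also a point the paper glosses over.
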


By Lemma~\ref{lemma: clt}, it follows that
$$\frac{1}{\sqrt {\log N(n)}}\sum_{i=1}^{N(n)} W_i \stackrel{d}\to \cN(0, \nu^2),$$
for some constant $\nu > 0$. Furthermore, by our choices of $\omega(n)$, $N(n)$, and $\eta(n)$, we have 
\begin{align*}
\frac{\omega(n)}{\sqrt{\log N(n)}} \to 0, \qquad \text{and} \qquad \frac{\eta(n)}{\sqrt{\log N(n)}} \to +\infty.
\end{align*}
Thus, 
\begin{equation*}
 \mathbb P\left(\frac{\omega(n)}{\sqrt{\log N(n)}} \leq \frac{1}{\sqrt {\log N(n)}}\sum_{i=1}^{N(n)} W_i < \frac{\eta(n)}{\sqrt{\log N(n)}} \right) \to 1/2,
\end{equation*}
implying that the left-hand probability expression becomes larger that $1/4$ for all large enough $n$.
Combining this with the bounds~\eqref{eq: bumbum} and~\eqref{eq: dundun}, we then obtain
\begin{align*}
\mathbb P(S_N \geq \omega(n)) \geq \exp \left(-N(n)I - \frac{\log^{\frac{3}{4}} n}{2} - \log 4 \right).
\end{align*}

Now recall the computation in equation~\eqref{eq: approx I}. Using $N = n(1+o(1))$, we arrive at
\begin{align*}
-\log T\left(N(n), p_n, q_n, \omega(n)\right) &= -\log \mathbb P(S_N \geq \omega(n)) \leq \left(\sum_{\ell=1}^L \left(\sqrt a_\ell - \sqrt b_\ell \right)^2\right)  \log n+ o(\log n).
\end{align*}
This concludes the proof.

\subsection{Proof of Lemma~\ref{lemma: clt}}
\label{AppLemCLT}

We show that the moment generating function of $\frac{\sum_{i=1}^n W_i}{\sqrt{\log n}}$ converges to that of a normal random variable. By a simple computation, we may check that $r$ is a sum of delta distributions with mass
\begin{equation}
\label{EqnDefnZeta}
\zeta\left(\log \frac{p_n(y)}{q_n(y)} - \log \frac{p_n(x)}{q_n(x)}\right) = \frac{\sqrt{p_n(x)q_n(x)p_n(y)q_n(y)}}{\left(\sum_{\ell = 0}^L \sqrt{p_n(\ell)q_n(\ell)}\right)^2},
\end{equation}
at the point $\log \frac{p_n(y)}{q_n(y)} - \log \frac{p_n(x)}{q_n(x)}$, for all $0 \le x, y \le L$. Note that the right-hand side of equation~\eqref{EqnDefnZeta} is symmetric with respect to $x$ and $y$, implying that $r$ is a symmetric distribution. For $x, y\neq 0$, we then have
\begin{align*}
\zeta\left(\log \frac{p_n(y)}{q_n(y)} - \log \frac{p_n(x)}{q_n(x)}\right) &= \frac{\sqrt{a_x b_x a_y b_y}}{\left(\sum_{\ell = 0}^L \sqrt{p_n(\ell)q_n(\ell)}\right)^2} \cdot \frac{\log^2 n}{n^2} = O\left(\frac{\log^2 n}{n^2}\right).
\end{align*}
For $x = 0$ and $y \neq 0$ (and by symmetry, for $y = 0$ and $x \neq 0$), we have
\begin{align*}
\zeta\left(\log \frac{p(y)}{q(y)} - \log \frac{p(0)}{q(0)}\right) &= \frac{\sqrt{(1-u\log n/n)(1-v\log n/n)a_y b_y}}{\left(\sum_{\ell = 0}^L \sqrt{p_n(\ell)q_n(\ell)}\right)^2} \cdot \frac{\log n}{n}\\
&= \frac{C_y\log n}{n} + O\left(\frac{\log^2 n}{n^2}\right),
\end{align*}
for a suitable constant $C_y > 0$. Hence,
\begin{equation*}
r(0) = 1 - \frac{C_0\log n}{n} + O\left(\frac{\log^2 n}{n^2}\right),
\end{equation*}
for some constant $C_0 > 0$.

We now examine the range of $W \stackrel{d}{=} W_i$, which we denote by the set $\cW$. Note that the range is finite, since $W$ can only take values from set $\left\{\log \left(\frac{p_n(y)q_n(x)}{q_n(y)p_n(x)}\right): 0 \leq x, y \leq K \right\}$. Also note that the range depends on $n$, since the ratio $\log \left(\frac{p_n(0)}{q_n(0)}\right)$ changes with $n$. However, since $\log \left(\frac{p_n(0)}{q_n(0)}\right) = O\left(\frac{\log n}{n}\right)$, this dependence may only perturb the range by $O\left(\frac{\log n}{n}\right)$. Thus, we may fix constants $\{0, \pm w_1, \dots, \pm w_R\}$ such that the range of $W$ is given by
\begin{equation*}
\cW = \{0, \pm \hat w_1, \dots, \pm \hat w_R\} \text{ where } \hat w_i = w_i + O\left(\frac{\log n}{n}\right), \text{ for } 1 \leq i \leq R.
\end{equation*}
Since $W$ is a symmetric random variable, it is easy to see that its moment generating function is given by
\begin{equation}
\label{EqnMGF}
\mathbb E e^{tW} = 1 + \sum_{j = 1}^R r(\hat w_j)\left(e^{t\hat w_j/2} - e^{-t\hat w_j/2}\right)^2,
\end{equation}
using the fact that $r(0) = 1 - \sum_{j=1}^R 2r(\hat w_j)$. As noted above, for certain nonzero $\hat w \in \cW$, we have $r(\pm \hat w) = \Theta\left(\frac{\log n}{n}\right)$; whereas for other values, we have $r(\hat w) = O\left( \frac{\log^2 n}{n^2}\right)$. Without loss of generality, let $r(\hat w_j)$, for $1 \leq j \leq N$, be $\Theta\left(\frac{\log n}{n} \right)$, and let $r(\hat w_j)$, for $ N+1 \leq j \leq R$, be $O\left(\frac{\log^2 n}{n^2}\right)$. We then write $r(\hat w_j) = \frac{C_j \log n}{n} + O\left(\frac{\log^2 n}{n^2}\right)$, for $1 \leq j \leq N$. Using the expression~\eqref{EqnMGF}, the moment generating function of $W$ is then given by
\begin{multline*}
\mathbb E e^{tW} = 1 + \sum_{1 \leq j \leq N} \left(\frac{C_j \log n}{n} + O\left(\frac{\log^2 n}{n^2}\right)\right)\left(e^{t\hat w_j/2} - e^{-t\hat w_j/2}\right)^2 \\
+ \sum_{N+1 \leq j \leq R} O\left(\frac{\log^2 n}{n^2} \right) \left(e^{t\hat w_j/2} - e^{-t\hat w_j/2}\right)^2.
\end{multline*}
Substituting $\frac{t}{\sqrt{\log n}}$ in place of $t$ and using the approximation $a^{x/2} - a^{-x/2} = x\log a + O(x^2\log^2 a)$ for $x = o(1)$, we arrive at
\begin{align*}
\mathbb E e^{tW/\sqrt{ \log n}} &=  1 + \sum_{1 \leq j \leq N} \left(\frac{C_j \log n}{n} + O\left(\frac{\log^2 n}{n^2}\right)\right) \left(\frac{t\hat w_j}{\sqrt {\log n}} + O\left(\frac{1}{\log n}\right)\right)^2\\
& \qquad + \sum_{N+1 \leq j \leq R} O\left(\frac{\log^2 n}{n^2} \right)\left(\frac{t\hat w_j}{\sqrt {\log n}} + O\left(\frac{1}{\log n}\right)\right)^2\\
%
%& = 1 + \sum_{1 \leq j \leq N} \left(\frac{C_j \log n}{n} + O\left(\frac{\log^2 n}{n^2}\right)\right) \left(\frac{t w_j}{\sqrt {\log n}} + O\left(\frac{1}{\log n}\right)\right)^2\\
%
%& \qquad + \sum_{N+1 \leq j \leq R} O\left(\frac{\log^2 n}{n^2} \right)\left(\frac{t w_j}{\sqrt {\log n}} + O\left(\frac{1}{\log n}\right)\right)^2\\
%
&= 1 + \frac{Ct^2}{n} + o\left(\frac{1}{n}\right),
\end{align*}
for a suitable constant $C$, where the second equality uses the fact that $\hat w_j = w_j + O\left(\frac{\log n}{n}\right)$, for all $1 \leq j \leq R$. Hence, the moment generating function of $\frac{\sum_{i=1}^n W_i}{\sqrt{\log n}}$ is given by
\begin{align*}
\left(\mathbb E e^{tW/\sqrt{ \log n}}\right)^n &= \left(1 + \frac{Ct^2}{n} + o\left(\frac{1}{n}\right)\right)^n \longrightarrow e^{Ct^2},
\end{align*}
which is the moment generating function of $\cN(0, 2C)$. This completes the proof.

\bibliography{refs}

\end{document}